\theoremstyle{plain}
\newtheorem{thm}{Theorem}[section]
\newtheorem{lemma}[thm]{Lemma}
\newtheorem{corollary}[thm]{Corollary}
\newtheorem{proposition}[thm]{Proposition}
\theoremstyle{definition}
\newtheorem{algorithm}[thm]{Algorithm}
\numberwithin{equation}{section}
\newcommand{\R}{\ensuremath{\mathds{R}}}
\newcommand{\N}{\ensuremath{\mathds{N}}}
\newcommand{\tr}{\ensuremath{\operatorname{tr}}}
\newcommand{\id}{\ensuremath{\operatorname{id}}}
\title[DBGA for entropically regularised Quantum optimal transport]{Dual Block gradient ascent for entropically regularised Quantum optimal transport}
\author{Marvin Randig}
\address[M. Randig]{Leipzig University\\Augustusplatz 10\\04109 Leipzig\\Germany}
\email{marvin.randig@mis.mpg.de}
\author{Max von Renesse}
\address[M. von Renesse]{Leipzig University\\Augustusplatz 10\\04109 Leipzig\\Germany}
\email{renesse@uni-leipzig.de}
\begin{document}

\begin{abstract}
    We present a block gradient ascent method for solving the quantum optimal transport problem with entropic regularisation similar to the algorithm proposed in \cite{FELICIANGELI2023109963} and \cite{caputo2024quantumoptimaltransportconvex}. 
    We prove a linear convergence rate based on strong concavity of the dual functional and present some results of numerical experiments of an implementation.
\end{abstract}

\maketitle
\tableofcontents

\section{Introduction}
In quantum optimal transport, non-commutative generalisations of the classical Monge-Kantorovich optimal transport problems are studied. Instead of transporting probability measures, quantum states are transported. Different approaches for generalisation are investigated, among them is a generalised formulation of the static transport problem which translates as in the commutative setting to finding an optimal coupling, cf.\ \cite{Caglioti2023} and \cite{DePalma2021}.

As for classical optimal transport, quantum optimal transport problems with entropy regularisation are studied. They have the physical interpretation as minimisation of free energy of a quantum system. 
For this problem, in \cite{FELICIANGELI2023109963} and \cite{caputo2024quantumoptimaltransportconvex}, an algorithm similar to the Sinkhorn algorithm for classical entropically regularised optimal transport based on alternating maximisation is presented. Based on these ideas, in the present paper, we propose a block gradient ascent method to solve this problem by solving its dual. 

Perhaps unsurprisingly, this method works and converges to the optimal solution at a linear rate. This is because the dual functional is strongly concave if one neglects a direction, in which it is constant. This direction however makes the result on the linear convergence rate of block coordinate descent methods from \cite[Thm.\ 3.9]{doi:10.1137/120887679} not directly applicable as it is not strongly concave in the canonical coordinates and in coordinates after eliminating the degenerate direction such that the dual functional is strongly concave, the update steps do not correspond to blocks of coordinate anymore. Still, the linear convergence rate for our method can be shown, as detailed below. 

\nocite{PEYRÉ_CHIZAT_VIALARD_SOLOMON_2019, NIPS2013_af21d0c9}

\section{Setting and presentation of the results}
\subsection{Entropy-regularised Quantum Optimal Transport}
The problem to be solved is to minimise the expected value of the energy of a composite quantum system conditioned on the states of its sub-systems. Due to possible entanglement, the state of the composite system is not fully determined by its sub-states. 

In mathematical terms, given a Hamiltonian operator $ C $ acting on a Hilbert space and two quantum states/ operators $ \rho $ and $ \sigma $, the goal is find a quantum state/ an operator $ \Gamma $ of a composite system minimising $ \tr(C\Gamma) $ subject to $ \Gamma $ being in a quantum system composed of two sub-systems in states $ \rho $ and $ \sigma $ respectively. 
For entropic regularisation, a multiple of the von Neumann entropy is added. 
Thus the objective is to find a quantum state $\Gamma$ minimising 
\begin{align}
    F(\Gamma) = \tr(C\Gamma) + \varepsilon S(\Gamma), \nonumber
\end{align}
where $S$ is the negative of the von Neumann entropy (cf.\ Subsection \ref{subsec:NeumannEntropy}) such that $ \Gamma $ is a quantum state of a system with two components being in states $\rho$ and $\sigma$ (cf.\ Subsection \ref{subsec:partialTrace}).
\subsection{Sets of operators on Hilbert spaces}
We consider two Hilbert spaces $ \mathcal{H}_1$ and $ \mathcal{H}_2$ of finite dimensions $ d_1 $ and $ d_2$. These shall be associated to sub-systems of a composite quantum system with Hilbert space $ \mathcal{H}_1 \otimes \mathcal{H}_2 $ of dimension $ d = d_1 d_2$. 

The set of self-adjoint operators on a Hilbert space $ \mathcal H$ shall be denoted $ H (\mathcal{H})$. Its subset of positive semi-definite  operators shall be denoted $ H_{\geq 0} ( \mathcal{H})$. 
The set of quantum states (or density matrices) is the set of matrices of $ H_{\geq 0}(\mathcal H)$ having trace equal to $ 1$ and is denoted $ \mathfrak{P}(\mathcal{H})$. 

\subsection{Tensor sum and partial traces}
For $U \in H(\mathcal{H}_1), V\in H(\mathcal{H}_2)$ write $ U \oplus V :=  U \otimes \id_{\mathcal{H}_2} + \id_{\mathcal{H}_1} \otimes V$. 

We also write $ H(\mathcal{H}_1) \oplus H(\mathcal{H}_2) := \left\{ A \oplus B \colon A \in H(\mathcal{H}_1), B \in H(\mathcal{H}_2) \right\}$. This shall not be understood as the direct sum in this case. Indeed, $ \left\{A \otimes \id \colon A \in H(\mathcal{H}_1) \right\} \cap \left\{\id \otimes B \colon B \in H(\mathcal{H}_2) \right\} = \{ r \id \colon r \in \R \} $. 

Given $ \Gamma \in H (\mathcal{H}_1 \otimes \mathcal{H}_2)$, one can define its first partial trace $ \tr_2(\Gamma) $ as the unique operator such that $\langle \Gamma, \ \cdot \ \otimes \id_{\mathcal{H}_2} \rangle_{\mathrm{HS}} = \langle \tr_2(\Gamma), \ \cdot \ \rangle_{\mathrm{HS}} $ as functions on $ H(\mathcal{H}_1) $ where $ \langle A, B\rangle_{\mathrm{HS}} = \tr(A^\dagger B) $ is the Hilbert-Schmidt inner product, that makes $ H(\mathcal{H}) $ an Euclidean vector space. As almost all operators present are self-adjoint, we will omit the $^\dagger $ and just write $ \tr (AB)$. Analogously, the second partial trace $ \tr_1 (\Gamma) $ is defined. If $\tr_2 (\Gamma) = \rho, \tr_1(\Gamma) = \sigma $, one writes $ \Gamma \mapsto (\rho, \sigma) $ and says $\Gamma$ has sub-states $\rho$ and $\sigma$. \label{subsec:partialTrace}

\subsection{Liftings of functions} \label{subsec:NeumannEntropy}
For every $ f: \R \to \R $ denote the lifting of $f$ to the space of self-adjoint operators $ H(\mathcal{H}) \to H(\mathcal{H}), \sum\limits_{j=1}^{\dim \mathcal{H}} \kappa_j | \xi_j \rangle \langle \xi_j | \mapsto \sum\limits_{j=1}^{\dim \mathcal H} f(\kappa_j) | \xi_j \rangle \langle \xi_j |  $ for orthonormal eigenvectors $ \{\xi_j \}_{j=1}^{\dim \mathcal H} \subset \mathcal{H} $ also by $f$. 
With this, one can define $ S: H_{\geq 0} \to \R, \Gamma \mapsto \tr(\Gamma \log(\Gamma)) $, the negative of the von Neumann entropy with the convention that $ 0 \log(0) = 0 $. 

\subsection{Primal and dual problem}

As already stated informally, given a regularisation parameter $ \varepsilon > 0$, two quantum states $ \rho \in \mathfrak{P}(\mathcal{H}_1), \sigma \in \mathfrak{P}(\mathcal{H}_2) $ and a Hamiltonian $ C \in H(\mathcal{H}_1 \otimes \mathcal{H}_2) $, the entropically regularised quantum optimal transport problem to be solved is to find $ \Gamma \in \mathfrak{P}(\mathcal{H}_1 \otimes \mathcal{H}_2) $ with $ \Gamma \mapsto (\rho, \sigma) $ minimising 
\begin{align}
    F(\Gamma) = \tr(C\Gamma) + \varepsilon S(\Gamma). \nonumber
\end{align}
Its dual problem is to find $ U \in H(\mathcal{H}_1), V\in H(\mathcal{H}_2)$ maximising 
\begin{align}
    D(U,V) &= \tr(U\rho) + \tr(V\sigma) - \varepsilon \tr \left( \exp\left( \frac{U \oplus V - C}{\varepsilon} \right) \right) + \varepsilon \nonumber \\
    &= \tr \left( (U \oplus V)(\rho \otimes \sigma) \right) - \varepsilon \tr \left( \exp\left( \frac{U \oplus V - C}{\varepsilon} \right) \right) + \varepsilon. \nonumber 
\end{align}

Abusing notation, we will also write $ D(U\oplus V) $ for $ D(U,V)$, which is well-defined by the above equation.  

We restate a part of \cite[Theorem 2.1]{FELICIANGELI2023109963}: 
\begin{thm}[Duality] \label{thm:Duality}
    There is a unique minimiser $ \Gamma \in \mathfrak{P}(\mathcal{H}_1 \otimes \mathcal{H}_2) $ of the primal problem. It holds 
    \begin{align*}
        \Gamma &= \exp \left( \frac{U \oplus V - C}{\varepsilon} \right) && \text{ on } (\ker \rho)^\perp \otimes (\ker \sigma)^\perp, \\
        \Gamma &= 0 && \text{ on } ((\ker \rho)^\perp \otimes (\ker \sigma)^\perp)^\perp 
    \end{align*}
    where $ U, V $ are maximisers of the dual problem with the restrictions of the density matrices to the complement of their kernel ($\rho |_{(\ker \rho)^\perp}$ and $ \sigma |_{(\ker \sigma)^\perp}$) as problem parameters. 
\end{thm}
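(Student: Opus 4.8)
The plan is to split the statement into three independent tasks: uniqueness (and existence) of the primal minimiser, a reduction to the case where $\rho$ and $\sigma$ have full rank, and the explicit identification of the minimiser through the dual problem in that full-rank case. The assertion $\Gamma = 0$ on $((\ker\rho)^\perp\otimes(\ker\sigma)^\perp)^\perp$ will fall out of the reduction step.

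\emph{Uniqueness of the primal minimiser.} The feasible set $\{\Gamma\in\mathfrak{P}(\mathcal H_1\otimes\mathcal H_2)\colon\Gamma\mapsto(\rho,\sigma)\}$ is closed, bounded, convex, and nonempty (it contains $\rho\otimes\sigma$). On it, $\Gamma\mapsto\tr(C\Gamma)$ is linear, and $S$ is the lifting $\tr\circ h$ of $h(x)=x\log x$, which is continuous on $H_{\geq 0}$ (with $0\log 0=0$) and, being the trace of a strictly convex scalar function, strictly convex. Hence $F$ is continuous and strictly convex on a compact convex set and attains its minimum at a unique $\Gamma$.

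\emph{Reduction to full rank.} For $v\in\ker\rho$ and an orthonormal basis $\{e_j\}$ of $\mathcal H_2$, feasibility gives $0=\langle\rho v,v\rangle=\langle\tr_2\Gamma,|v\rangle\langle v|\rangle_{\mathrm{HS}}=\sum_j\langle v\otimes e_j,\Gamma(v\otimes e_j)\rangle$, and since $\Gamma\geq 0$ each summand vanishes, so $v\otimes\mathcal H_2\subseteq\ker\Gamma$; symmetrically $\mathcal H_1\otimes\ker\sigma\subseteq\ker\Gamma$. Thus $\ker\Gamma$ contains $(\ker\rho\otimes\mathcal H_2)+(\mathcal H_1\otimes\ker\sigma)$, whose orthogonal complement is exactly $(\ker\rho)^\perp\otimes(\ker\sigma)^\perp$; consequently every feasible $\Gamma$ vanishes on the complement, $\tr(C\Gamma)$ and $S(\Gamma)$ depend only on the compression of $\Gamma$ to $K:=(\ker\rho)^\perp\otimes(\ker\sigma)^\perp$, the marginal constraints become $\tr_{2}(\Gamma|_K)=\rho|_{(\ker\rho)^\perp}$, $\tr_1(\Gamma|_K)=\sigma|_{(\ker\sigma)^\perp}$, and the problem is identified with the same problem on $K$ with the now full-rank marginals and $C$ replaced by $C|_K$. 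So it suffices to treat $\rho>0$, $\sigma>0$, and the second line of the theorem is then automatic.

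\emph{Full-rank case and primal optimality.} Since $U\oplus V$ only depends on the element $U\oplus V$ of the linear subspace $W:=H(\mathcal H_1)\oplus H(\mathcal H_2)\subseteq H(\mathcal H_1\otimes\mathcal H_2)$, view $D$ as a function on $W$. It is strictly concave there (restriction of a strictly concave function, as $W\mapsto\tr\exp(W)$ is strictly convex on $H(\mathcal H_1\otimes\mathcal H_2)$) and coercive: along $W_n\in W$ with $\|W_n\|\to\infty$, either $\lambda_{\max}(W_n)\to\infty$, whence $-\varepsilon\tr\exp((W_n-C)/\varepsilon)\leq-\varepsilon\exp((\lambda_{\max}(W_n)-\|C\|)/\varepsilon)\to-\infty$, or $\lambda_{\max}(W_n)$ is bounded and $\lambda_{\min}(W_n)\to-\infty$, whence $\langle W_n,\rho\otimes\sigma\rangle_{\mathrm{HS}}\to-\infty$ because $\rho\otimes\sigma>0$; using $-\varepsilon\tr\exp(\cdot)<0$ gives $D(W_n)\to-\infty$ in both cases. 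Hence $D$ has a unique maximiser $W^\ast=U^\ast\oplus V^\ast$ on $W$, and using $\tfrac{d}{dt}\tr\exp(A+tB)\big|_{0}=\tr(Be^{A})$ the first-order condition $\langle B,\rho\otimes\sigma-\exp((W^\ast-C)/\varepsilon)\rangle_{\mathrm{HS}}=0$ for all $B\in W$ yields, on taking $B=B_1\oplus 0$ and $B=0\oplus B_2$ and using $\tr_2(\rho\otimes\sigma)=\rho$, $\tr_1(\rho\otimes\sigma)=\sigma$, that $\tr_2\exp((W^\ast-C)/\varepsilon)=\rho$ and $\tr_1\exp((W^\ast-C)/\varepsilon)=\sigma$. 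Therefore $\Gamma:=\exp((W^\ast-C)/\varepsilon)$ is positive definite, has trace $1$, and is feasible. For any feasible $\tilde\Gamma$, convexity of $S$ together with $\nabla S(\Gamma)=\log\Gamma+\id$ and $\varepsilon\log\Gamma=W^\ast-C=U^\ast\oplus V^\ast-C$ give
\begin{align*}
    F(\tilde\Gamma)-F(\Gamma)\ \geq\ \tr\!\big((C+\varepsilon\log\Gamma+\varepsilon\,\id)(\tilde\Gamma-\Gamma)\big)\ =\ \tr\!\big((U^\ast\oplus V^\ast+\varepsilon\,\id)(\tilde\Gamma-\Gamma)\big),
\end{align*}
and the right-hand side vanishes because $\tr((U^\ast\oplus V^\ast)(\tilde\Gamma-\Gamma))=\tr(U^\ast(\tr_2\tilde\Gamma-\tr_2\Gamma))+\tr(V^\ast(\tr_1\tilde\Gamma-\tr_1\Gamma))=0$ and $\varepsilon\tr(\tilde\Gamma-\Gamma)=0$. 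So $\Gamma$ is a minimiser, hence \emph{the} minimiser by the uniqueness above, which is the claimed formula on $K$.

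\emph{Main obstacle.} The delicate points are the attainment of the dual maximiser — i.e.\ the coercivity of $D$ on $W$, which works only modulo the one-dimensional degenerate direction $\{(r\,\id,-r\,\id)\}$ and relies essentially on $\rho\otimes\sigma$ being full rank — and making the kernel reduction precise, so that the "dual problem with the restricted density matrices as parameters'' in the statement is literally the one whose first-order conditions are used. The direct primal-optimality computation in the last step is what allows us to bypass any abstract strong-duality theorem.
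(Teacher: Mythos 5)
The paper does not actually prove Theorem \ref{thm:Duality}; it is restated from \cite[Theorem 2.1]{FELICIANGELI2023109963} and used as a black box, so there is no internal proof to compare against. Your self-contained argument is essentially the standard one for entropic duality (strict convexity of $\tr(\Gamma\log\Gamma)$ for uniqueness, kernel reduction via positivity of $\Gamma$, existence of a dual maximiser by coercivity of $D$ on the subspace $H(\mathcal H_1)\oplus H(\mathcal H_2)$ of $H(\mathcal H_1\otimes\mathcal H_2)$ --- which neatly quotients out the degenerate direction $(r\,\id,-r\,\id)$ since it maps to $0$ under $\oplus$ --- and then verification of primal optimality via the first-order condition and the subgradient inequality), and I find it correct. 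Two small points deserve tightening. First, in the coercivity argument the case $\lambda_{\max}(W_n)\to\infty$ is not finished by observing that the exponential term tends to $-\infty$, because the linear term $\langle W_n,\rho\otimes\sigma\rangle_{\mathrm{HS}}$ may simultaneously tend to $+\infty$; you need the bound $\langle W_n,\rho\otimes\sigma\rangle_{\mathrm{HS}}\leq\lambda_{\max}(W_n)\tr(\rho\otimes\sigma)=\lambda_{\max}(W_n)$, which is then dominated by $-\varepsilon\exp\left((\lambda_{\max}(W_n)-\|C\|)/\varepsilon\right)$. Second, in the final step the inequality $S(\tilde\Gamma)\geq S(\Gamma)+\langle\log\Gamma+\id,\tilde\Gamma-\Gamma\rangle_{\mathrm{HS}}$ is applied with $\tilde\Gamma$ possibly singular while $S$ is only differentiable at positive definite points; this is fine because $\Gamma=\exp((W^\ast-C)/\varepsilon)>0$ and the gradient inequality for a convex function at a point of differentiability holds against arbitrary points of its (convex) domain, but it is worth saying explicitly. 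With those two sentences added, the proof stands and in fact supplies a proof the present paper only cites.
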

Thus, in the following, we can assume that all eigenvalues of $ \rho $ and $ \sigma $ are strictly positive. This implies in particular that $ \rho \otimes \sigma $ is positive definite, i.e.\ its smallest eigenvalue is strictly positive, $ \lambda_{\min} (\rho \otimes \sigma) > 0$. 

\subsection{Gradients and marginal errors}

In order to implement the block gradient ascent method, one needs to know the gradient of the function that is to maximise. 
When $ D $ is viewed as a function on $ H(\mathcal{H}_1 \otimes \mathcal{H}_2) $, one has
    \begin{align}
        \nabla_{H(\mathcal{H}_1 \otimes \mathcal{H}_2)} D(U \oplus V) 
        &= \rho \otimes \sigma - \exp \left( \frac{U \oplus V -C}{\varepsilon} \right) ,
        \label{eq:nablaDfull}
    \end{align}
    cf. Corollary \ref{cor:nablaTrExp}.
When $D$ is viewed as a function on $H(\mathcal{H}_1) \times H(\mathcal{H}_2) $, its gradient with respect to the first variable ($U \in H(\mathcal{H}_1)$) is 
\begin{align*}
E_1(U,V):= \nabla_1 D(U,V) = \rho - \tr_2 \left( \exp \left( \frac{U \oplus V -C}{\varepsilon} \right) \right)
\end{align*}
and analogously its gradient with respect to the second variable  ($V \in H(\mathcal{H}_2)$) is 
\begin{align*}
    E_2(U,V) := \nabla_2 D (U,V) = \sigma - \tr_1 \left( \exp \left( \frac{U \oplus V -C}{\varepsilon} \right) \right).
\end{align*}
They naturally have an interpretation as marginal errors.

\subsection{Description of the algorithms} \label{subsec:descrAlg}

The overall idea is to implement a block gradient ascent method to compute sequences $ \left( U_n \right)_n \subset H(\mathcal{H}_1) $ and $ \left(V_n\right)_n \subset H(\mathcal{H}_2) $ such that $ D(U_{n+1}, V_{n+1}) \geq D( U_n, V_n) $ for all $ n$ (with strict inequality if $D(U_n,V_n)$ is not maximal). 
Making use of Theorem \ref{thm:Duality} (and some thought), this then implies 
\begin{itemize}
    \item $ U_n \oplus V_n \overset{n\to \infty}{\longrightarrow} \hat U \oplus \hat V$ for some $ \hat U \in H(\mathcal{H}_1) $ and $ \hat V \in H(\mathcal{H}_2) $,
    \item $ D(\hat U, \hat V) = \max\limits_{U \in H(\mathcal{H}_1), V \in H(\mathcal{H}_2) } D(U,V)$,
    \item for $ \hat \Gamma := \exp \left( \frac{\hat U \oplus \hat V - C}{\varepsilon} \right) = \lim\limits_{n\to \infty} \exp \left( \frac{U_n \oplus V_n - C}{\varepsilon} \right)$ it holds $ \hat \Gamma \in \mathfrak{P}\left(\mathcal{H}_1 \otimes \mathcal{H}_2\right)$ as well as $ \Gamma \mapsto (\rho, \sigma) $ and 
    \item $ F\left(\hat \Gamma \right) = \min\limits_{\Gamma \in  \mathfrak{P}(\mathcal{H}_1 \otimes \mathcal{H}_2), \Gamma \mapsto (\rho, \sigma)} F(\Gamma) $,
\end{itemize}
cf.\ the proofs of Theorem 2.3 from \cite{FELICIANGELI2023109963} and Proposition 3.12 from \cite{caputo2024quantumoptimaltransportconvex}. 

In the following description of the algorithm, $ \mathfrak{U}$; $\mathfrak{V}$; $\mathfrak{E}_1$; $\mathfrak{E}_2 $ are to be understood as variables for matrices holding values $ U_0, U_1, \ldots$; $V_0, V_1, \ldots$; $E_{1,0}, E_{1,1}, \ldots$; $E_{2,0}, E_{2,1}; \ldots $ and as only the most recently assigned value is required in each case, storing only the last value of the sequence is sufficient. For unambiguity, each computation containing these values is noted twice: Once in a way how it can implemented memory-efficiently with the variables containing only the most recent value and once as an equation of the actual values. 
\begin{algorithm}[Block Gradient Ascent Method for entropically regularised quantum optimal transport] \label{alg:bga}
    
    For a given desired precision $ \delta > 0$, one can implement the following procedure:   
    \begin{enumerate}
        \item choose $ \mathfrak{U} \equiv  U_0\in H(\mathcal{H}_1), \mathfrak{V}\equiv V_0 \in H(\mathcal{H}_2) $;
        \item compute\footnote{the function $ \nu_2$ is the inverse function of $ e^x-x-1$ as defined in Lemma \ref{lem:BoundEvs}} $ \beta \leftarrow \nu_2 \left( \frac{\tr((\rho \otimes \sigma) C)- D(\mathfrak{U}, \mathfrak{V})}{\varepsilon}  \right) \equiv \nu_2 \left( \frac{\tr((\rho \otimes \sigma) C)- D(U_0,V_0)}{\varepsilon}  \right) $;
        \item compute $ \eta_1 =  \frac{\varepsilon} {d_2} \exp \left( - \beta \right), \eta_2 =  \frac{\varepsilon}{d_1} \exp \left( - \beta \right)  $; 
        \item repeat for $n=0,1,2, \ldots $ until $ \| \mathfrak{E}_1\| \equiv \| E_{1,n} \|, \| \mathfrak{E}_2 \|  \equiv \| E_{2,n} \| < \delta $:
        \begin{enumerate}
            \item set $ \mathfrak{E}_1 \leftarrow \rho - \tr_2 \left( \exp \left( \frac{\mathfrak{U} \oplus \mathfrak{V}- C}{\varepsilon} \right) \right) \equiv E_{1,n} = \rho - \tr_2 \left( \exp \left( \frac{U_n \oplus V_n - C}{\varepsilon} \right) \right) $;
            \item update $ \mathfrak{U} \leftarrow \mathfrak{U} + \eta_1 \mathfrak{E}_1 \equiv U_{n+1} = U_n + \eta_1 E_{1,n} $;
            \item set $ \mathfrak{E}_2 \leftarrow \sigma - \tr_1 \left( \exp \left( \frac{\mathfrak{U} \oplus \mathfrak{V}- C}{\varepsilon} \right) \right) \equiv E_{2,n} = \sigma - \tr_1 \left( \exp \left( \frac{U_{n+1} \oplus V_n - C}{\varepsilon} \right) \right) $;
            \item update $ \mathfrak{V} \leftarrow \mathfrak{V} + \eta_2 \mathfrak{E}_2 \equiv V_{n+1} = V_n + \eta_2 E_{2,n} $;
        \end{enumerate}
        \item return $ \exp \left( \frac{\mathfrak{U} \oplus \mathfrak{V} - C}{\varepsilon} \right) \equiv \exp \left( \frac{U_{n+1} \oplus V_{n+1} - C}{\varepsilon} \right)$.
    \end{enumerate}
\end{algorithm}
\subsection{Convergence theorem}
\begin{restatable*}[Convergence of Algorithm \ref{alg:bga}]{thm}{convergenceAlg}\label{thm:Convergence}
    In the algorithm presented, for the sequences of matrices $  \left(U_n\right)_n $ and $ \left(V_n\right)_n $, the sequence of their tensor sums $ \left( U_n \oplus V_n \right)_n $ converges to a limit $ \hat U \oplus \hat V \in H(\mathcal{H}_1)\oplus H(\mathcal{H}_2) \subset H\left( \mathcal{H}_1 \otimes \mathcal{H}_2 \right) $. The matrices $ \hat U$ and $ \hat V $ are maximisers of the dual functional and $ \exp \left( \frac{\hat U \oplus \hat V - C}{\varepsilon} \right) $ is a minimiser of the original primal problem. 

    The algorithm converges at least linearly, i.e. 
    \begin{align*}
        \liminf\limits_{n\to +\infty} -\frac{1}{n} \log \left\| \exp \left( \frac{U_n \oplus V_n- C}{\varepsilon} \right) - \exp \left( \frac{\hat U \oplus \hat V - C}{\varepsilon} \right)  \right\|
        > 0.
    \end{align*}
\end{restatable*}

The proof of the linear convergence rate is based on the following result regarding the strong concavity of the dual functional:

\begin{restatable*}[Concavity of the dual functional]{lemma}{strongConcavity}\label{lem:DstronglyConcave}
    Let $ M \in \R $. 
    The dual functional when viewed as a function on $ H(\mathcal{H}_1) \oplus H(\mathcal{H}_2) $ is $ \gamma $-strongly concave (i.e. $-D$ is $ \gamma$-strongly convex) on $ D^{-1} ( [M,+ \infty [) $ with 
    \begin{align*}
        \gamma = \gamma (M) = \varepsilon^{-1} \exp \left( \frac{\iota_1(M)}{\varepsilon} \right),
    \end{align*}
    where $ \iota_1$ is as in Lemma \ref{lem:BoundEvs}. 
\end{restatable*}

\section{Proof of Lemma \ref{lem:DstronglyConcave} and Theorem \ref{thm:Convergence}}

In the following, we give bounds on the eigenvalues of the Hessian of $D$ to conclude that it is strongly concave and has a Lipschitz-continuous gradient and from this we deduce a linear convergence rate. 

\begin{lemma}[Spectral bounds]
    \label{lem:BoundEvs}
    There is a monotonically increasing function $ \iota_1 $ and a monotonically decreasing function $ \iota_2 $ defined on 
    the image of the dual functional 
     $\left ] - \infty, \max\limits_{U \in H(\mathcal{H}_1), V \in H(\mathcal{H}_2) } D(U,V) \right] $ 
    such that for all $ U \in H(\mathcal{H}_1), V \in H(\mathcal{H}_2) $
    \begin{align*}
        \iota_1 (D(U,V)) \leq \lambda_{\min}\left( U \oplus V - C \right) \leq \lambda_{\max} \left( U \oplus V - C \right) \leq \iota_2 (D(U,V)).
    \end{align*}
    Moreover the estimates 
    \begin{align*}
        \iota_2(x) \leq \varepsilon \nu_2 \left(\frac{\tr((\rho \otimes \sigma) C)- x}{\varepsilon} \right)
    \end{align*}
    and
    \begin{align*}
        \iota_1(x) \geq \varepsilon \nu_1 \left( x - \varepsilon -  \tr \left( C (\rho \otimes \sigma) \right) -  \varepsilon \nu_2 \left(\frac{\tr((\rho \otimes \sigma) C)- x}{\varepsilon} \right) (1 - \lambda_{\min}(\rho\otimes\sigma) ) \right)
    \end{align*}
    hold, where the function
    $ \nu_1: \left] - \infty, \varepsilon  \lambda_{\min} (\rho \otimes \sigma) \left( \log (  \lambda_{\min} (\rho \otimes \sigma)) -  \log(d) - 1 \right) \right] \to \left] - \infty, \log \left( \lambda_{\min}(\rho \otimes \sigma)\right) - \log (d )  \right] $ is the inverse function (taking the smallest values) of $ x \mapsto \varepsilon {\lambda_{\min}(\rho \otimes \sigma)} x - d\varepsilon \exp(x) $ and $ \nu_2 : \R_{\geq 0} \to \R_{\geq 0} $ is the inverse function of $ x \mapsto e^x -x-1 $, which are both monotonically increasing on their specified domains.
\end{lemma}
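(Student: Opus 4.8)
The plan is to read off both spectral bounds for $A := U\oplus V - C$ directly from the identity
\[
D(U,V) \;=\; \varepsilon\,\tr\!\big(B(\rho\otimes\sigma)\big) \;-\; \varepsilon\,\tr(\exp B) \;+\; \tr\!\big(C(\rho\otimes\sigma)\big) \;+\; \varepsilon,\qquad B:=A/\varepsilon,
\]
which follows from the definition of $D$ by substituting $U\oplus V = A+C$ and $A = \varepsilon B$. Two elementary facts will be used throughout. First, for a self-adjoint $X=\sum_j\xi_j|v_j\rangle\langle v_j|$ and a density matrix $\mu$ one has $\tr(X\mu)=\sum_j\xi_j p_j$ with weights $p_j:=\langle v_j|\mu|v_j\rangle$ satisfying $\sum_jp_j=1$ and $p_j\ge\lambda_{\min}(\mu)$. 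Second, $\tr(\exp B)\ge e^{\lambda_{\max}(B)}$ and $\tr(\exp B)\ge d\,e^{\lambda_{\min}(B)}$, since the eigenvalues of $\exp B$ are the $e^{\beta_j}>0$ and each $\beta_j\in[\lambda_{\min}(B),\lambda_{\max}(B)]$. I will also record that $D(U,V)\le\tr\!\big(C(\rho\otimes\sigma)\big)$ for every $U,V$: by Jensen, $e^{\tr(B(\rho\otimes\sigma))}=e^{\sum_jp_j\beta_j}\le\sum_jp_je^{\beta_j}\le\tr(\exp B)$, hence $\tr(B(\rho\otimes\sigma))\le\log\tr(\exp B)\le\tr(\exp B)-1$, and feeding this into the identity gives the claim; in particular $\tfrac{\tr((\rho\otimes\sigma)C)-D(U,V)}{\varepsilon}\ge0$, so $\nu_2$ may legitimately be applied to it. (Recall that by Theorem~\ref{thm:Duality} we work in the regime $\lambda_{\min}(\rho\otimes\sigma)>0$, so all the logarithms appearing are well defined.)

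For the \emph{upper bound}, insert $\tr(B(\rho\otimes\sigma))\le\lambda_{\max}(B)$ and $\tr(\exp B)\ge e^{\lambda_{\max}(B)}$ into the identity to obtain, with $t:=\lambda_{\max}(B)=\lambda_{\max}(A)/\varepsilon$,
\[
\varepsilon\big(e^{t}-t-1\big)\;\le\;\tr\!\big(C(\rho\otimes\sigma)\big)-D(U,V).
\]
If $t<0$ then $\lambda_{\max}(A)<0\le\varepsilon\,\nu_2(\cdot)$ and there is nothing to prove; if $t\ge0$ then $e^{t}-t-1\ge0$ lies in the domain of $\nu_2$, and applying $\nu_2$ (the inverse of $x\mapsto e^x-x-1$ on $\R_{\ge0}$, which is increasing) gives $t\le\nu_2\!\big(\tfrac{\tr((\rho\otimes\sigma)C)-D(U,V)}{\varepsilon}\big)$. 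Either way $\lambda_{\max}(A)\le\varepsilon\,\nu_2\!\big(\tfrac{\tr((\rho\otimes\sigma)C)-D(U,V)}{\varepsilon}\big)=:\iota_2(D(U,V))$, which already matches the first claimed estimate.

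For the \emph{lower bound}, set $s:=\lambda_{\min}(B)=\lambda_{\min}(A)/\varepsilon$ and $\bar\beta:=\nu_2\!\big(\tfrac{\tr((\rho\otimes\sigma)C)-D(U,V)}{\varepsilon}\big)\ge0$, so that $\lambda_{\max}(B)\le\bar\beta$ by the previous step. If $s\ge\bar\beta$, then $s>0$ while the argument of $\nu_1$ below lands in $\nu_1$'s codomain $\,]-\infty,\log\lambda_{\min}(\rho\otimes\sigma)-\log d]\subset\,]-\infty,0[$, so the asserted inequality $\lambda_{\min}(A)\ge\iota_1(D(U,V))$ is trivial; hence assume $s<\bar\beta$. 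Bounding each $\beta_j$ by $\bar\beta$ except the one equal to $s$, then using that $p\mapsto sp+\bar\beta(1-p)$ is non-increasing together with $p_j\ge\lambda_{\min}(\rho\otimes\sigma)$, yields $\tr(B(\rho\otimes\sigma))\le\lambda_{\min}(\rho\otimes\sigma)\,s+\bar\beta\big(1-\lambda_{\min}(\rho\otimes\sigma)\big)$. Combining this with $\tr(\exp B)\ge d\,e^{s}$ and the identity and rearranging gives
\[
D(U,V)-\varepsilon-\tr\!\big(C(\rho\otimes\sigma)\big)-\varepsilon\bar\beta\big(1-\lambda_{\min}(\rho\otimes\sigma)\big)\;\le\;\varepsilon\lambda_{\min}(\rho\otimes\sigma)\,s-d\varepsilon\,e^{s}\;=:\;g(s).
\]
The function $g(x)=\varepsilon\lambda_{\min}(\rho\otimes\sigma)x-d\varepsilon e^x$ is increasing on $\,]-\infty,\log\lambda_{\min}(\rho\otimes\sigma)-\log d]$ with inverse $\nu_1$, and attains its maximum $\varepsilon\lambda_{\min}(\rho\otimes\sigma)(\log\lambda_{\min}(\rho\otimes\sigma)-\log d-1)$ at the right endpoint, so the displayed inequality has a right side bounded by $\max g$ and thus lies in the domain of $\nu_1$; if $s$ lies in that interval, applying $\nu_1$ gives $s\ge\nu_1(\cdot)$, and if instead $s>\log\lambda_{\min}(\rho\otimes\sigma)-\log d$ the same bound is immediate since $\nu_1\le\log\lambda_{\min}(\rho\otimes\sigma)-\log d$. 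Multiplying by $\varepsilon$ gives $\lambda_{\min}(A)\ge\varepsilon\,\nu_1\!\big(D(U,V)-\varepsilon-\tr(C(\rho\otimes\sigma))-\varepsilon\bar\beta(1-\lambda_{\min}(\rho\otimes\sigma))\big)=:\iota_1(D(U,V))$, the second claimed estimate.

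It then remains to verify the bookkeeping, which is routine: $\nu_1,\nu_2$ are well defined and increasing on the stated domains by computing the derivatives of $x\mapsto e^x-x-1$ and of $g$ and identifying their ranges; $\iota_2$ is decreasing (composition of the increasing $\nu_2$ with a decreasing argument); the inner argument of $\nu_1$ in $\iota_1$ is strictly increasing in $x$, being $x$ minus a non-negative multiple of the non-increasing quantity $\nu_2\!\big(\tfrac{\tr((\rho\otimes\sigma)C)-x}{\varepsilon}\big)$ minus constants, so $\iota_1$ is increasing; and both functions are defined on all of $\,]-\infty,\max D]$ because there the argument of $\nu_2$ is $\ge0$ (since $\max D\le\tr(C(\rho\otimes\sigma))$) and, since the argument of $\nu_1$ is increasing in $x$ and the image of the concave, smooth, below-unbounded functional $D$ is exactly that interval, the displayed inequality shows this argument never exceeds $\max g$. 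The only step requiring genuine care is the lower bound: assembling the estimate $\tr(\exp B)\ge d\,e^{s}$, the weighted-eigenvalue bound on $\tr(B(\rho\otimes\sigma))$ that exploits $p_j\ge\lambda_{\min}(\rho\otimes\sigma)$ and the already-established $\bar\beta$, and then inverting the non-monotone $g$ on its increasing branch while disposing of the degenerate case $s\ge\bar\beta$.
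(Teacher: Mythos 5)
Your proof is correct and follows essentially the same route as the paper's (Lemmas \ref{lem:evLowerBound} and \ref{lem:evUpperBound}): expand $D$ in the eigenbasis of $U\oplus V-C$, bound $\tr((U\oplus V-C)(\rho\otimes\sigma))$ using the weights $\langle\xi_j|\rho\otimes\sigma|\xi_j\rangle\ge\lambda_{\min}(\rho\otimes\sigma)$ and $\tr(\exp(\cdot))$ from below by the extreme eigenvalues, then invert the resulting scalar inequalities via $\nu_1,\nu_2$. Your treatment is if anything slightly more careful about the edge cases ($t<0$, $s\ge\bar\beta$, $s$ past the peak of $g$) and the domain/monotonicity bookkeeping that the paper leaves implicit.
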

\begin{proof}
    Using the arguments from \cite[Prop.\ 3.3]{caputo2024quantumoptimaltransportconvex}, it is possible to show 
    \begin{align*}
        &\phantom{\geq} 
        \lambda_{\min}( U \oplus V - C)  
        \\
        &\geq \varepsilon \nu_1 \left( D(U,V) - \varepsilon
        - \tr \left( C (\rho \otimes \sigma) \right) - (1 - \lambda_{\min}(\rho\otimes\sigma) )  \lambda_{\max} (U \oplus V - C) \right) ,
    \end{align*}
    and 
    \begin{align*}
    &\phantom{\leq }\lambda_{\max}( U \oplus V - C) 
    \\
    &\leq 
    \varepsilon \nu_2 \left(\frac{\tr((\rho \otimes \sigma) C)- D(U,V)}{\varepsilon} - (d-1) \exp\left( \frac{\lambda_{\min} (U \oplus V - C)}{\varepsilon} \right)\right)
    \\
    &\leq 
    \varepsilon \nu_2 \left(\frac{\tr((\rho \otimes \sigma) C)- D(U,V)}{\varepsilon} \right). 
    \end{align*}
    The details are given in Appendix \ref{sec:Eigenvalues}. 
\end{proof}

\strongConcavity 
\begin{proof}
    By Equation (\ref{eq:nablaDfull}) and Lemma \ref{lem:hsDerivativeExp}, for all $ Z \in H( \mathcal{H}_1) \oplus H(\mathcal{H}_2) $, it holds
    
    \begin{align*}
        & \phantom{ = }
        - 
        \lim\limits_{t\to 0} \left \langle \frac{\nabla_{H(\mathcal{H}_1) \oplus H(\mathcal{H}_2)} D (U \oplus V + t Z) - \nabla_{H(\mathcal{H}_1) \oplus H(\mathcal{H}_2)} D ( U \oplus V)}{t} \middle \vert Z \right \rangle_{\mathrm{HS}}
        \\
        &= 
        -\lim\limits_{t \to 0} \left \langle \frac{\nabla_{H(\mathcal{H}_1 \otimes \mathcal{H}_2)} D (U \oplus V + t Z) - \nabla_{H(\mathcal{H}_1 \otimes \mathcal{H}_2)} D ( U \oplus V)}{t} \middle \vert Z \right \rangle_{\mathrm{HS}}
        \\
        &= 
         \left \langle \lim\limits_{t \to 0} \frac{\exp \left( \frac{U \oplus V + t Z -C}{\varepsilon} \right) - \exp \left( \frac{U \oplus V -C}{\varepsilon} \right)}{t} \middle \vert Z \right \rangle_{\mathrm{HS}}
        \\
        &=
        \frac{1}{\varepsilon}
        \left \langle \mathrm{d} \exp_{\frac{U \oplus V - C}{\varepsilon}}  \left( Z \right) \middle \vert Z \right \rangle_{\mathrm{HS}} 
        \\
        & \geq 
        \frac{1}{\varepsilon} \exp \left( \lambda_{\min} \left(\frac{U \oplus V - C}{\varepsilon} \right) \right) \| Z \|_{\mathrm{F}}^2.
    \end{align*}
    Thus each eigenvalue of the Hessian of $D$ (as a function on $ {H(\mathcal{H}_1) \oplus H(\mathcal{H}_2)} $) at $ U \oplus V $ is at least 
    \begin{align*}
    \frac{1}{\varepsilon} \exp \left( \lambda_{\min} \left(\frac{U \oplus V - C}{\varepsilon} \right) \right) \geq \varepsilon^{-1} \exp \left( \frac{\iota_1(M)}{\varepsilon} \right).
    \end{align*}
\end{proof}
On the other hand, the gradient of $ D$ is also Lipschitz-continuous on the set $D^{-1}([M, +\infty[)$ for any $M \in \R$ due to Lemma \ref{lem:BoundEvs}. This gives rise to the following result: 
\begin{corollary}[Dual functional improvement] \label{cor:dualImprovment}
    Let $ \beta :=  \iota_2 \left( \frac{\tr((\rho \otimes \sigma) C)- D(U,V)}{\varepsilon}  \right) / \varepsilon $, where $ \iota_2 $ is as defined in Lemma \ref{lem:BoundEvs}. 
    For $ 0 \leq \eta \leq  \frac{2\varepsilon}{d_2} \exp \left( - \beta \right) $ one has 
    \begin{align*}
        D(S_{\eta}(U), V)
        & \geq -\frac{1}{2\varepsilon}d_2 \exp \left(  \beta  \right) \| E_1(U) \|_{\mathrm{F}}^2 \eta^2 + \| E_1(U) \|_{\mathrm{F}}^2 \eta +  D(U,V), 
    \end{align*}
    in particular for $ \eta_0 = \frac{\varepsilon}{d_2} \exp \left( - \beta \right) $
    \begin{align*}
        D(S_\eta(U), V) 
        \geq 
         \frac{1}{2} \| E_1(U) \|_{\mathrm{F}}^2 \eta_0 + D(U,V) .
    \end{align*}
\end{corollary}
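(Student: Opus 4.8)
The plan is to fix $V$, abbreviate $E_1 := E_1(U) = E_1(U,V) = \nabla_1 D(U,V)$, and study the smooth scalar function
\begin{align*}
 g(\eta) := D(S_\eta(U),V) = D\big(U\oplus V + \eta\, Z\big),\qquad Z := E_1\otimes\id_{\mathcal{H}_2}\in H(\mathcal{H}_1)\oplus H(\mathcal{H}_2),
\end{align*}
on $[0,\infty[$, using that the gradient ascent step is $S_\eta(U) = U + \eta E_1$; the goal is to minorise $g$ by the concave quadratic coming from a second-order Taylor expansion with a lower bound on the remainder. The zeroth- and first-order data are immediate: $g(0) = D(U,V)$, and, because $E_1$ is the first partial gradient of $D$, the chain rule gives $g'(0) = \langle\nabla_1 D(U,V),E_1\rangle_{\mathrm{HS}} = \|E_1\|_{\mathrm{F}}^2$. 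If $E_1 = 0$ both asserted inequalities reduce to $D(U,V)\ge D(U,V)$, so from now on assume $g'(0)>0$.

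Next I would differentiate once more along the constant direction $Z$ and bound $g''$. Combining the Hessian identity established inside the proof of Lemma \ref{lem:DstronglyConcave}, $\langle\mathrm{Hess}\,D(W)\,Z,Z\rangle_{\mathrm{HS}} = -\varepsilon^{-1}\langle\mathrm{d}\exp_{(W-C)/\varepsilon}(Z),Z\rangle_{\mathrm{HS}}$, with the upper bound $\langle\mathrm{d}\exp_{A}(Z),Z\rangle_{\mathrm{HS}}\le\exp(\lambda_{\max}(A))\,\|Z\|_{\mathrm{F}}^2$ from Lemma \ref{lem:hsDerivativeExp}, and with $\|Z\|_{\mathrm{F}}^2 = \|E_1\otimes\id_{\mathcal{H}_2}\|_{\mathrm{F}}^2 = d_2\|E_1\|_{\mathrm{F}}^2$ (which is the source of the dimensional factor $d_2$), one obtains, writing $W(\eta):=U\oplus V+\eta Z$,
\begin{align*}
 g''(\eta) = \langle\mathrm{Hess}\,D(W(\eta))\,Z,Z\rangle_{\mathrm{HS}} \ \ge\ -\frac{d_2}{\varepsilon}\exp\!\Big(\lambda_{\max}\!\big(\tfrac{W(\eta)-C}{\varepsilon}\big)\Big)\|E_1\|_{\mathrm{F}}^2 .
\end{align*}
The crucial observation is that \emph{whenever} $g(\eta)\ge g(0)=D(U,V)$, Lemma \ref{lem:BoundEvs} together with the monotonicity of $\iota_2$ gives $\lambda_{\max}(W(\eta)-C)\le\iota_2(g(\eta))\le\iota_2(D(U,V))\le\varepsilon\beta$, hence $g''(\eta)\ge -L$ there, with $L:=\tfrac{d_2}{\varepsilon}e^{\beta}\|E_1\|_{\mathrm{F}}^2$. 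Since $D$, and therefore $g$, is concave while $g'(0)>0$, the super-level set $\{\eta\ge 0 : g(\eta)\ge g(0)\}$ is an interval $[0,\eta^\star]$ with $\eta^\star>0$; on it, Taylor's formula with integral remainder yields $g(\eta)\ge g(0)+\eta g'(0)-\tfrac{L}{2}\eta^2$, which is exactly the asserted quadratic lower bound.

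What remains — and the only step I expect to be genuinely non-routine — is to check that the admissible range $\big[0,\tfrac{2\varepsilon}{d_2}e^{-\beta}\big]$ lies inside $[0,\eta^\star]$, so that the quadratic bound applies throughout it. This is a short bootstrap exploiting concavity: if $\eta^\star<\infty$, then $g(\eta^\star)=g(0)$ by continuity, and feeding $\eta=\eta^\star$ into the bound just derived forces $g'(0)\le\tfrac{L}{2}\eta^\star$, i.e. $\eta^\star\ge\tfrac{2g'(0)}{L}=\tfrac{2\varepsilon}{d_2}e^{-\beta}$ (and this is trivial if $\eta^\star=\infty$). Therefore $g(\eta)\ge D(U,V)+\|E_1\|_{\mathrm{F}}^2\eta-\tfrac{d_2}{2\varepsilon}e^{\beta}\|E_1\|_{\mathrm{F}}^2\eta^2$ holds for every $\eta\in\big[0,\tfrac{2\varepsilon}{d_2}e^{-\beta}\big]$, which is the first claim; substituting $\eta_0=\tfrac{\varepsilon}{d_2}e^{-\beta}$, for which $\tfrac{L}{2}\eta_0^2=\tfrac12\|E_1\|_{\mathrm{F}}^2\eta_0$, collapses this to $D(S_{\eta_0}(U),V)\ge D(U,V)+\tfrac12\|E_1\|_{\mathrm{F}}^2\eta_0$, the second claim.
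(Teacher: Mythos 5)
Your proof is correct and follows essentially the route the paper intends: the paper states this corollary without an explicit proof, deriving it from the Lipschitz continuity of $\nabla D$ on the superlevel set $D^{-1}([D(U,V),+\infty[)$ (equivalently, the curvature bound $g''\geq -L$ with $L=\tfrac{d_2}{\varepsilon}e^{\beta}\|E_1\|_{\mathrm{F}}^2$ that you obtain from Lemma \ref{lem:hsDerivativeExp} and Lemma \ref{lem:BoundEvs}) together with the standard descent-lemma computation. Your concavity bootstrap showing that the segment up to $\eta=\tfrac{2\varepsilon}{d_2}e^{-\beta}$ actually stays inside the superlevel set --- which is needed because the curvature bound is only available there --- is a detail the paper leaves implicit, and you handle it cleanly; note also that the paper's $\beta$ is written with $\iota_2$ applied to $\tfrac{\tr((\rho\otimes\sigma)C)-D(U,V)}{\varepsilon}$ where, comparing with Algorithm \ref{alg:bga} and Lemma \ref{lem:BoundEvs}, $\nu_2$ is evidently meant, and your reading $\varepsilon\beta\geq\iota_2(D(U,V))$ is the intended one.
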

As the function $D$ is not strongly concave in ${H(\mathcal{H}_1) \times H(\mathcal{H}_2)}$-coordinates as there is the degenerate direction $ (\id, -\id) \in {H(\mathcal{H}_1) \times H(\mathcal{H}_2)}$, one cannot directly use the result on the convergence rate of block coordinate descent methods from \cite{doi:10.1137/120887679}. We use the following estimate: 

\begin{proposition}[Relation between marginal error and dual functional] \label{prop:MarginalDual}
    Let $ M \in \R $. 
    For $ U_0 \in H(\mathcal{H}_1), V_0 \in H(\mathcal{H}_2) $, one has
    \begin{align*}
        \max\limits_{U \in H(\mathcal{H}_1), V \in H(\mathcal{H}_2) } D(U,V) - D\left( U_0, V_0 \right) \leq 
        \frac{\left\| \nabla_1 D(U_0, V_0) \right\|_{\mathrm{F}}^2 + \left\| \nabla_2 D(U_0, V_0) \right\|_{\mathrm{F}}^2}{2\gamma}
    \end{align*}
    with $ \gamma = \gamma(D(U_0, V_0)) $ as in Lemma \ref{lem:DstronglyConcave}. 
\end{proposition}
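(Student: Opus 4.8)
The plan is to establish a Polyak--\L ojasiewicz type inequality for the dual functional on the linear subspace $H(\mathcal{H}_1)\oplus H(\mathcal{H}_2)$, on which Lemma~\ref{lem:DstronglyConcave} supplies strong concavity, and then to relate the gradient on that subspace to the two partial gradients $\nabla_1 D$ and $\nabla_2 D$.

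Concretely, I would first set $M := D(U_0,V_0)$ and $x_0 := U_0 \oplus V_0$, and view $D$ as a function on $H(\mathcal{H}_1)\oplus H(\mathcal{H}_2)$ via $U\oplus V \mapsto D(U,V)$; this is the restriction of the ambient map $W \mapsto \tr(W(\rho\otimes\sigma)) - \varepsilon\tr\exp\!\left(\frac{W-C}{\varepsilon}\right) + \varepsilon$ on $H(\mathcal{H}_1\otimes\mathcal{H}_2)$, whose gradient at $x_0$ is $G := \rho\otimes\sigma - \exp\!\left(\frac{U_0\oplus V_0-C}{\varepsilon}\right)$ by Equation~\eqref{eq:nablaDfull}. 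Since this ambient map is convex (it is, up to affine terms, $\varepsilon\tr\exp$ of an affine map), the superlevel set $S := D^{-1}([M,+\infty[)$ is convex, and Lemma~\ref{lem:DstronglyConcave} makes $D$ $\gamma(M)$-strongly concave on $S$. A maximiser $\hat U\oplus\hat V$ exists (cf.\ Theorem~\ref{thm:Duality}) and lies in $S$ because $D(\hat U,\hat V) \geq M$; by convexity the segment from $x_0$ to $\hat U\oplus\hat V$ stays in $S$, so I can apply along it the quadratic upper bound coming from $\gamma(M)$-strong concavity and then maximise the resulting concave quadratic over $\hat U\oplus\hat V - x_0$. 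This yields $\max D - D(U_0,V_0) \leq \frac{1}{2\gamma(M)}\,\|\nabla_{H(\mathcal{H}_1)\oplus H(\mathcal{H}_2)} D(x_0)\|_{\mathrm F}^2$, where $\nabla_{H(\mathcal{H}_1)\oplus H(\mathcal{H}_2)} D(x_0)$ is the Hilbert--Schmidt-orthogonal projection $PG$ of $G$ onto the subspace.

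The second step is a basis computation comparing $\|PG\|_{\mathrm F}$ with the partial gradients. I would fix orthonormal bases $(e_i)_{i=1}^{d_1^2}$ of $H(\mathcal{H}_1)$ and $(f_j)_{j=1}^{d_2^2}$ of $H(\mathcal{H}_2)$ with $e_1 = \id_{\mathcal{H}_1}/\sqrt{d_1}$ and $f_1 = \id_{\mathcal{H}_2}/\sqrt{d_2}$, so that $(e_i\otimes f_j)_{i,j}$ is an orthonormal basis of $H(\mathcal{H}_1\otimes\mathcal{H}_2)$ and, because $e_i\otimes\id_{\mathcal{H}_2} = \sqrt{d_2}\,e_i\otimes f_1$ and $\id_{\mathcal{H}_1}\otimes f_j = \sqrt{d_1}\,e_1\otimes f_j$, the subspace $H(\mathcal{H}_1)\oplus H(\mathcal{H}_2)$ is spanned by the orthonormal family $\{e_i\otimes f_1\}_i \cup \{e_1\otimes f_j\}_{j\geq 2}$. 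Writing $G_{ij} := \langle e_i\otimes f_j, G\rangle_{\mathrm{HS}}$, this gives $\|PG\|_{\mathrm F}^2 = \sum_i G_{i1}^2 + \sum_j G_{1j}^2 - G_{11}^2$. On the other hand, $\nabla_1 D(U_0,V_0) = \tr_2 G$ (using $\tr_2(\rho\otimes\sigma)=\rho$ since $\tr\sigma=1$), whence $\langle e_i, \nabla_1 D(U_0,V_0)\rangle_{\mathrm{HS}} = \langle e_i\otimes\id_{\mathcal{H}_2}, G\rangle_{\mathrm{HS}} = \sqrt{d_2}\,G_{i1}$ and so $\|\nabla_1 D(U_0,V_0)\|_{\mathrm F}^2 = d_2\sum_i G_{i1}^2$; symmetrically $\|\nabla_2 D(U_0,V_0)\|_{\mathrm F}^2 = d_1\sum_j G_{1j}^2$. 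Since $d_1,d_2\geq 1$, it follows that $\|PG\|_{\mathrm F}^2 \leq \sum_i G_{i1}^2 + \sum_j G_{1j}^2 \leq \|\nabla_1 D(U_0,V_0)\|_{\mathrm F}^2 + \|\nabla_2 D(U_0,V_0)\|_{\mathrm F}^2$, and substituting into the bound from the first step gives the claim.

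I expect the only genuinely delicate point to be the mismatch between the two coordinate systems: strong concavity is available only on the subspace $H(\mathcal{H}_1)\oplus H(\mathcal{H}_2)$ and only on the superlevel set $S$, whereas the bound we want is phrased in the product coordinates $H(\mathcal{H}_1)\times H(\mathcal{H}_2)$. The restriction to $S$ is harmless because $S$ is convex and contains a maximiser, so the quadratic bound may be used along the connecting segment; and the change of coordinates works in our favour precisely because the partial traces $\tr_2,\tr_1$ carry the factors $\sqrt{d_2},\sqrt{d_1}\geq 1$, which is exactly the direction of inequality needed. The remaining pieces (the quadratic upper bound from strong concavity and the basis bookkeeping) are routine.
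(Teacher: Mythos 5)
Your proposal is correct and follows essentially the same route as the paper: a Polyak--\L ojasiewicz bound from the $\gamma$-strong concavity of $D$ on the convex superlevel set $D^{-1}([D(U_0,V_0),+\infty[)$ in the subspace $H(\mathcal{H}_1)\oplus H(\mathcal{H}_2)$, followed by bounding the norm of the projected gradient by $\|\nabla_1 D\|_{\mathrm F}^2/d_2+\|\nabla_2 D\|_{\mathrm F}^2/d_1-\tr(\nabla_1 D)^2/d\leq\|\nabla_1 D\|_{\mathrm F}^2+\|\nabla_2 D\|_{\mathrm F}^2$. Your orthonormal-basis computation of $\|PG\|_{\mathrm F}^2$ is just a coordinate version of the paper's explicit formula for $\nabla_{H(\mathcal{H}_1)\oplus H(\mathcal{H}_2)}D$, so the two arguments coincide.
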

\begin{proof}
    
    As $ D $ when viewed as a function on $ H(\mathcal{H}_1) \oplus H(\mathcal{H}_2)$ is ${\gamma(D(U_0 \oplus V_0))}$-strongly concave on $ D^{-1} ([D(U_0 \oplus V_0), +\infty[) $ and the set $ D^{-1} \left([D(U_0 \oplus V_0), +\infty[\right) $ is convex by Lemma \ref{lem:DstronglyConcave}, one gets 
    \begin{align*}
        \max\limits_{U \in H(\mathcal{H}_1), V \in H(\mathcal{H}_2) } D(U \oplus V) - D\left( U_0 \oplus V_0 \right) \leq  \frac{\left\| \nabla_{H(\mathcal{H}_1) \oplus H(\mathcal{H}_2)} D(U_0 \oplus V_0) \right\|_{\mathrm{F}}^2}{2\gamma(D(U_0\oplus V_0))}.
    \end{align*} 

    One verifies 
    \begin{align*}
        \nabla_{H(\mathcal{H}_1 \oplus \mathcal{H}_2)} D(U \oplus V) = \frac{\nabla_1 D(U,V)}{d_2} \oplus \left( \frac{\nabla_2 D(U,V)}{d_1} - \frac{\tr\left( \nabla_1 D(U,V) \right) \id }{d} \right)
    \end{align*}
    as computation shows
    \begin{align*}
        & \phantom{=}
        \left \langle 
        \frac{\nabla_1 D(U,V)}{d_2} \oplus \left( \frac{\nabla_2 D(U,V)}{d_1} - \frac{\tr\left( \nabla_1 D(U,V) \right) \id }{d} \right)
        \middle \vert 
        A \oplus B 
        \right \rangle_{\mathrm{HS}} 
        \\
        &=
        \left \langle 
        \nabla_1 D(U,V) 
        \middle \vert 
        A 
        \right \rangle_{\mathrm{HS}}
        + 
        \left \langle 
        \nabla_2 D(U,V)
        \middle \vert 
        B 
        \right \rangle_{\mathrm{HS}}
        \\
        &=
        \left\langle 
        \nabla_{H(\mathcal{H}_1 \otimes \mathcal{H}_2)} D(U \oplus V) 
        \middle \vert
        A \oplus B
        \right \rangle_{\mathrm{HS}}
    \end{align*}
    for all $ A\in H( \mathcal{H}_1), B \in H(\mathcal{H}_2) $. 
    One can then compute 
    \begin{align*}
        &\phantom{=}
        \left\| \nabla_{H(\mathcal{H}_1) \oplus H(\mathcal{H}_2)} D(U_0 \oplus V_0) \right\|_{\mathrm{F}}^2
        \\
        &=
        \frac{\|\nabla_1 D(U,V) \|_{\mathrm{F}}^2 }{d_2} + \frac{\|\nabla_2 D(U,V) \|_{\mathrm{F}}^2 }{d_1} 
        - \frac{\tr\left( \nabla_1 D(U,V) \right)^2}{d}
        \\
        & \leq 
        \|\nabla_1 D(U,V) \|_{\mathrm{F}}^2 + \|\nabla_2 D(U,V) \|_{\mathrm{F}}^2
    \end{align*}
    and the claim follows.
    
\end{proof}

\convergenceAlg

\begin{proof}
    The proof of convergence is as outlined in Subsection \ref{subsec:descrAlg} and completely analogous to the proofs of Theorem 2.3 from \cite{FELICIANGELI2023109963} and Proposition 3.12 from \cite{caputo2024quantumoptimaltransportconvex}. 
    We now show the linear convergence rate. 
    
    By Corollary \ref{cor:dualImprovment} (and its analogue for the second variable), one knows that for all $ n \in \N $
    \begin{align}
        D(U_n, V_n) \leq D(U_{n+1}, V_n) \leq  D(U_{n+1}, V_{n+1}). \label{eq:DualMonotone}
    \end{align}
    By the Lipschitz-continuity of the gradient of $D$, one has 
    \begin{align*}
        C_1 \|E_1 ( U_{n+1}, V_n ) \| \leq \| E_1 (U_n, V_n) \|
    \end{align*}
    for some $ C_1 > 0$. 
    Also by Proposition \ref{prop:MarginalDual}, 
    \begin{align*}
        \|E_1 ( U_{n+1}, V_n ) \|^2 + \| E_2 ( U_{n+1}, V_n ) \|^2 \geq C_2 (D(\hat U, \hat V) - D(U_{n+1}, V_n))
    \end{align*}
    for some $ C_2 > 0$. Thus 
    \begin{align*}
        \|E_1 ( U_n, V_n ) \|^2 + \| E_2 ( U_{n+1}, V_n ) \|^2 
        &\geq 
        C_1\|E_1 ( U_n, V_{n+1} ) \|^2 + \| E_2 ( U_{n+1}, V_n ) \|^2 
        \\
        &\geq C_3 (D(\hat U, \hat V) - D(U_{n+1}, V_n))
    \end{align*}
    for some $ C_3 > 0$ (e.g.\ $C_3 = \min\{1, C_1\} C_2$), but since 
    \begin{align*}
        D(U_{n+1}, V_{n+1}) 
        &\geq D(U_{n+1}, V_n)+C_4 \|E_2(U_{n+1}, V_n) \|^2
        \\
        &\geq D(U_n, V_n)+C_4 \|E_2(U_{n+1}, V_n)\|^2+C_5 \|E_1(U_n, V_n)\|^2
    \end{align*}
    for some $C_4, C_5 > 0$
    by Corollary \ref{cor:dualImprovment}, one gets 
    \begin{align}
        D(\hat U, \hat V) - D(U_{n+1}, V_{n+1})
        \leq
        D(\hat U, \hat V) - D(U_{n+1}, V_n) \label{eq:dualDiffMonotone}
    \end{align}
    by Equation (\ref{eq:DualMonotone}) and 
    \begin{align}
        D(\hat U, \hat V) - D(U_{n+1}, V_{n+1})
        \leq
        D(\hat U, \hat V) - D(U_n, V_n) - C_6 (D(\hat U, \hat V) - D(U_{n+1}, V_n)) \label{eq:daulDiffEstimate}
    \end{align} 
    for some $ C_6 > 0$
    because 
    \begin{align*}
        D(U_{n+1}, V_{n+1}) - D (U_n, V_n) 
        &\geq C_7 (\|E_1 ( U_n, V_n ) \|^2 + \| E_2 ( U_{n+1}, V_n ) \|^2) 
        \\
        &\geq C_6 (D(\hat U, \hat V) - D(U_{n+1}, V_n))
    \end{align*}
    for $ C_7 = \min \{C_4, C_5\} > 0$ and $ C_6 = C_7 C_3 $. 
    Then Equation (\ref{eq:dualDiffMonotone}) and Equation (\ref{eq:daulDiffEstimate}) yield the linear convergence: As $$ D(\hat U, \hat V) - D(U_{n+1}, V_n) \in [D(\hat U, \hat V) - D(U_{n+1}, V_{n+1}),D(\hat U, \hat V) - D(U_n, V_n)] $$ by Equation (\ref{eq:DualMonotone}),
    \begin{align*}
    &\phantom{\leq}
        D(\hat U, \hat V) - D(U_{n+1}, V_{n+1})
        \\
        &\leq
        D(\hat U, \hat V) +
        \min \left\{
         - D(U_{n+1}, V_n), 
         - D(U_n, V_n) - C_6 (D(\hat U, \hat V) - D(U_{n+1}, V_n))
        \right\}
        \\
        & \leq 
        \max\limits_{x \in [D(\hat U, \hat V) - D(U_{n+1}, V_{n+1}),D(\hat U, \hat V) - D(U_n, V_n)]} \min \left\{
        x, 
        D(\hat U, \hat V) - D(U_n, V_n) - C_6 x
        \right\}
        \\& =
        \frac{1}{1 + C_6} \left( D(\hat U, \hat V) - D(U_n, V_n) \right) 
    \end{align*}
    and as
    \begin{align*}
        \frac{\gamma(D(U_0, V_0))}{2} \| (U_n \oplus V_n) - (\hat U \oplus \hat V) \|_{\mathrm{F}}^2 \leq D\left( \hat U, \hat V\right) - D(U_n, V_n) 
    \end{align*}
    by the strong concavity (Lemma \ref{lem:DstronglyConcave}),
     $ U_n \oplus V_n $ converges also linearly to $ \hat U \oplus \hat V$. Then the function $ U \oplus V \mapsto \exp\left( \frac{U\oplus V - C}{\varepsilon} \right) $is Lipschitz-continuous on the compact set $ D^{-1}([D(U_0,V_0), + \infty]) $ so that likewise $ \exp \left( \frac{U_n \oplus V_n - C}{\varepsilon} \right) $ is linearly convergent to $\exp \left( \frac{\hat U \oplus \hat V - C}{\varepsilon} \right) $. 
\end{proof}

\section{Numerical Experiments}

\begin{figure}[htbp]
    \centering
    \begin{minipage}{0.45\textwidth}
        \centering
        \includegraphics[width=\textwidth]{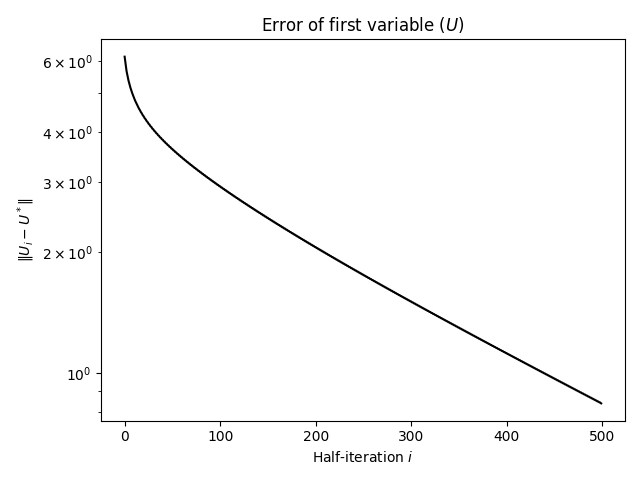}
    \end{minipage}\hfill
    \begin{minipage}{0.45\textwidth}
        \centering
        \includegraphics[width=\textwidth]{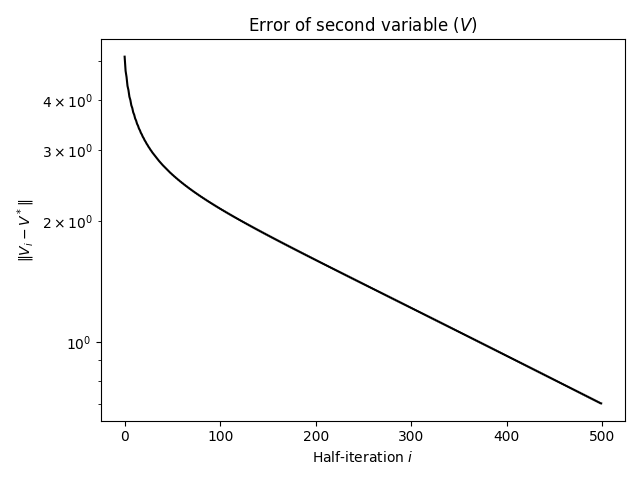}
    \end{minipage}
    \vskip\baselineskip  
    \begin{minipage}{0.45\textwidth}
        \centering
        \includegraphics[width=\textwidth]{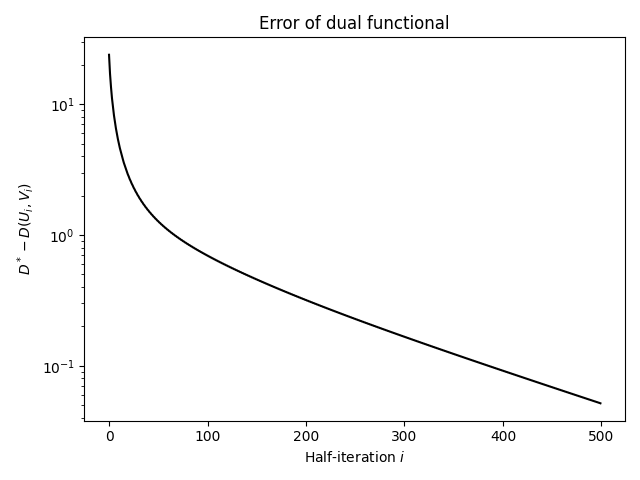}
    \end{minipage}\hfill
    \begin{minipage}{0.45\textwidth}
        \centering
        \includegraphics[width=\textwidth]{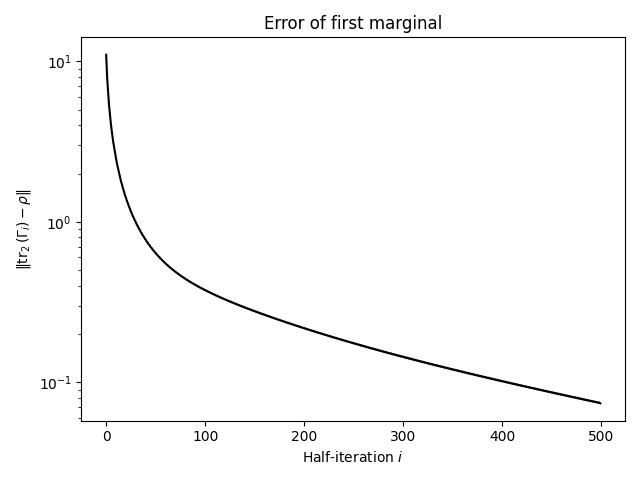}
    \end{minipage}
    \vskip\baselineskip  
    \begin{minipage}{0.45\textwidth}
        \centering
        \includegraphics[width=\textwidth]{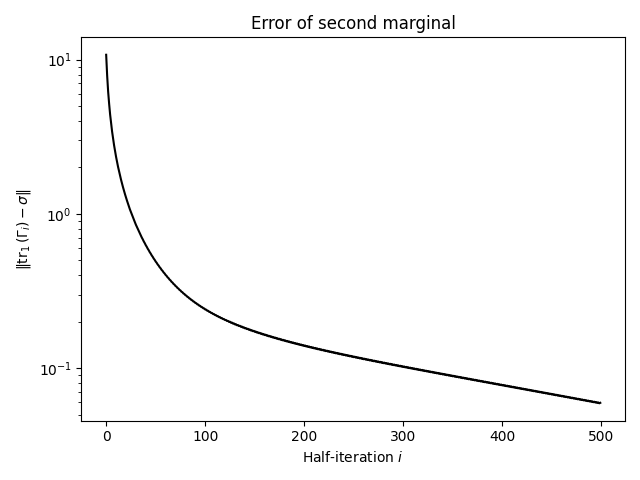}
    \end{minipage}\hfill
    \begin{minipage}{0.45\textwidth}
        \centering
        \includegraphics[width=\textwidth]{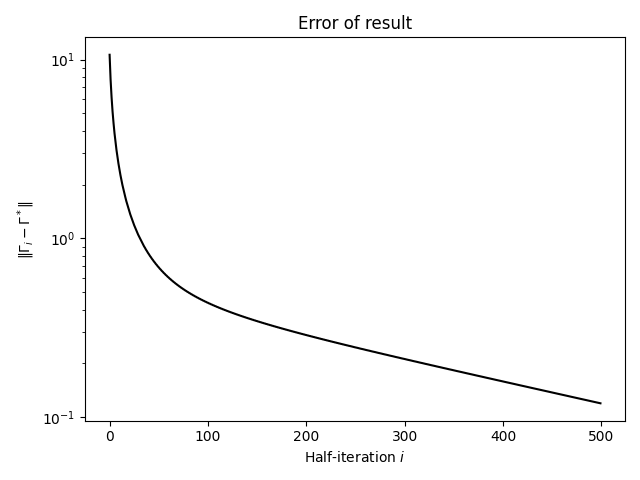}
    \end{minipage}
    \caption{These plots show the evolution of the errors of the first and second variable, dual functional, first and second marginal and result for a run of the first version of the algorithm for two $2 \times 2$ matrices as marginals and a $4\times 4$ matrix as Hamiltonian. In each of the plots, the corresponding value is plotted whenever $U$ or $V$ changes. The first 500 out of 6169 values are shown.}
    \label{fig:run1}
\end{figure}

In this section, we show some results of a run of Algorithm \ref{alg:bga}.  In Figure \ref{fig:run1},  errors are plotted, confirming empirically that after an initial phase the  convergence rate is linear.  For comparison with the optimal value, the output of the algorithm was used.  
The problem parameters were 
\begin{center}
{\small 
\begin{gather*}
    \varepsilon = 2.1440887263813604,
    \\
    \rho = \begin{pmatrix}
        0.56136249 & 0.27057949-0.07380046i \\
        0.27057949+0.07380046i & 0.43863751
    \end{pmatrix},
    \\
~\\
    \sigma = \begin{pmatrix}
        0.52286971 & 0.15823801+0.10230994i \\
        0.15823801-0.10230994 & 0.47713029
    \end{pmatrix}
    \end{gather*}
    }
{\tiny 
    \begin{gather*}
C =  \begin{pmatrix}
        -2.86929025 &  1.47503849-0.49036513i & 0.77301277-1.97328837i & -0.23265498+2.27434098i \\
        1.47503849+0.49036513i & 0.59776252 & 1.23103852-0.20920128i & 1.33284629-0.70580861i \\
        0.77301277+1.97328837i & 1.23103852+0.20920128i & 0.68185949 & -0.34221   -2.08459979i \\
        -0.23265498-2.27434098i & 1.33284629+0.70580861i & -0.34221   +2.08459979 &  1.62863414
    \end{pmatrix}.
\end{gather*}
}
\end{center}
\smallskip 
With $U_0 = V_0 = 0$ and stopping parameter $\delta = 10^{-8}$, the algorithm terminated after 3084 iterations with the result 
\smallskip 
{
\tiny
\begin{align*}
\begin{pmatrix}
    0.35691051 & 0.05291375+0.0295855421i & 0.14517306-0.00787450092i & 0.0973649 -0.105676301i \\
    0.05291375-0.0295855421i & 0.20445197 & -0.00712014-0.0616189999i & 0.12540642-0.0659259594i \\
    0.14517306+0.00787450092i & -0.00712014+0.0616189999i & 0.1659592 & 0.10532426+0.0727243916i \\
    0.0973649 +0.105676301i & 0.12540642+0.0659259594i & 0.10532426-0.0727243916i & 0.27267832
\end{pmatrix}.
\end{align*}
}

\section{Acknowledgements}
MVR thanks Bernd Sturmfels, Simon Telen, and Fran\c{c}ois-Xavier Vialard for stimulating discussions prior to this study. This work is partially  supported by BMBF (Federal Ministry of Education and Research) in DAAD project 57616814 (SECAI, School of Embedded Composite AI).

\nocite{MR4578664}

\bibliographystyle{amsalpha}
\bibliography{References}

\providecommand{\bysame}{\leavevmode\hbox to3em{\hrulefill}\thinspace}
\providecommand{\MR}{\relax\ifhmode\unskip\space\fi MR }
\providecommand{\MRhref}[2]{%
  \href{http://www.ams.org/mathscinet-getitem?mr=#1}{#2}
}
\providecommand{\href}[2]{#2}
\begin{thebibliography}{STVvR24}

\bibitem[BT13]{doi:10.1137/120887679}
Amir Beck and Luba Tetruashvili, \emph{On the convergence of block coordinate
  descent type methods}, SIAM Journal on Optimization \textbf{23} (2013),
  no.~4, 2037--2060.

\bibitem[Car10]{MR2681769}
Eric Carlen, \emph{Trace inequalities and quantum entropy: an introductory
  course}, Entropy and the quantum, Contemp. Math., vol. 529, Amer. Math. Soc.,
  Providence, RI, 2010, pp.~73--140. \MR{2681769}

\bibitem[CGMP24]{caputo2024quantumoptimaltransportconvex}
Emanuele Caputo, Augusto Gerolin, Nataliia Monina, and Lorenzo Portinale,
  \emph{Quantum optimal transport with convex regularization}, 2024.

\bibitem[CGP23]{Caglioti2023}
Emanuele Caglioti, Fran\c{c}ois Golse, and Thierry Paul, \emph{Towards optimal
  transport for quantum densities}, Annali Scuola Normale Superiore - Classe di
  Scienze (2023), 1981–2045.

\bibitem[Cut13]{NIPS2013_af21d0c9}
Marco Cuturi, \emph{Sinkhorn distances: Lightspeed computation of optimal
  transport}, Advances in Neural Information Processing Systems (C.J. Burges,
  L.~Bottou, M.~Welling, Z.~Ghahramani, and K.Q. Weinberger, eds.), vol.~26,
  Curran Associates, Inc., 2013.

\bibitem[DPT21]{DePalma2021}
Giacomo De~Palma and Dario Trevisan, \emph{Quantum optimal transport with
  quantum channels}, Annales Henri Poincaré \textbf{22} (2021), no.~10,
  3199–3234.

\bibitem[FGP23]{FELICIANGELI2023109963}
Dario Feliciangeli, Augusto Gerolin, and Lorenzo Portinale, \emph{A
  non-commutative entropic optimal transport approach to quantum composite
  systems at positive temperature}, Journal of Functional Analysis \textbf{285}
  (2023), no.~4, 109963.

\bibitem[KS48]{PhysRev.73.1020}
Robert Karplus and Julian Schwinger, \emph{A note on saturation in microwave
  spectroscopy}, Phys. Rev. \textbf{73} (1948), 1020--1026.

\bibitem[PCVS19]{PEYRÉ_CHIZAT_VIALARD_SOLOMON_2019}
Gabriel Peyré, Lénaïc Chizat, Fran\c{c}ois-Xavier Vialard, and Justin
  Solomon, \emph{Quantum entropic regularization of matrix-valued optimal
  transport}, European Journal of Applied Mathematics \textbf{30} (2019),
  no.~6, 1079–1102.

\bibitem[STVvR24]{MR4578664}
Bernd Sturmfels, Simon Telen, Fran\c{c}ois-Xavier Vialard, and Max von Renesse,
  \emph{Toric geometry of entropic regularization}, J. Symbolic Comput.
  \textbf{120} (2024), Paper No. 102221, 15. \MR{4578664}

\end{thebibliography}

\clearpage
\appendix

\section{Derivatives of the matrix exponential}

Fundamental for the upcoming investigations of this section is the following result due to \cite{PhysRev.73.1020}:
\begin{lemma}[Derivative of matrix exponential] \label{lem:derivativeMatrixExp}
    One has 
    \begin{align*}
    \mathrm{d} \exp _A (B) = 
        \lim\limits_{t\to 0} \frac{\exp(A+tB) - \exp(A)}{t} = \int\limits_0^1 \exp((1-s)A)B\exp(sA) \mathrm{d} s.
    \end{align*}
\end{lemma}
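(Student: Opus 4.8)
The plan is to obtain the formula from a Duhamel identity instead of by differentiating the exponential series. Fix operators $A, B$ and a scalar $t$, and consider the auxiliary map $G(s) := \exp(-sA)\exp(s(A+tB))$ on $s \in [0,1]$. Because $A$ commutes with $\exp(-sA)$, the standard derivative rule $\frac{\mathrm{d}}{\mathrm{d}s}\exp(sX) = X\exp(sX)$ gives
\begin{align*}
G'(s) &= -\exp(-sA)\,A\,\exp(s(A+tB)) + \exp(-sA)\,(A+tB)\,\exp(s(A+tB)) \\
&= t\,\exp(-sA)\,B\,\exp(s(A+tB)).
\end{align*}
Integrating over $[0,1]$, using $G(0) = \id$ and $G(1) = \exp(-A)\exp(A+tB)$, and then multiplying from the left by $\exp(A)$ (so that $\exp(A)\exp(-sA) = \exp((1-s)A)$), one arrives at the finite-$t$ identity
\begin{align*}
\exp(A+tB) - \exp(A) = t \int_0^1 \exp((1-s)A)\,B\,\exp(s(A+tB))\,\mathrm{d}s .
\end{align*}

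Dividing by $t \ne 0$ and letting $t \to 0$ then simultaneously shows that the limit defining $\mathrm{d}\exp_A(B)$ exists and that it equals $\int_0^1 \exp((1-s)A)\,B\,\exp(sA)\,\mathrm{d}s$, \emph{provided} the limit can be moved inside the integral. To justify that, I would record a bound of the form $\|\exp(s(A+tB)) - \exp(sA)\| \le c(A,B)\,|t|$ valid uniformly for $s \in [0,1]$ and $|t| \le 1$; this follows either by applying the identity just derived with $A$, $tB$ replaced by $sA$, $stB$ and estimating with submultiplicativity of the operator norm, or by comparing the two power series term by term. It makes the integrand converge to $\exp((1-s)A)\,B\,\exp(sA)$ uniformly in $s$, so — the interval being bounded — the passage to the limit is legitimate. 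As an alternative route, one can establish the identity purely algebraically: the coefficient of $t$ in $\exp(A+tB) = \sum_{n \ge 0}(A+tB)^n/n!$ is $\sum_{n \ge 1}\frac{1}{n!}\sum_{k=0}^{n-1}A^k B A^{n-1-k}$, while expanding both exponentials inside the integral and invoking $\int_0^1 (1-s)^j s^i\,\mathrm{d}s = i!\,j!/(i+j+1)!$ gives $\sum_{i,j \ge 0} A^j B A^i/(i+j+1)!$, and the substitution $n = i+j+1$, $k = j$ identifies the two sums.

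The computation itself is short, and the only step needing genuine care — hence the main obstacle — is the interchange of $\lim_{t\to 0}$ with $\int_0^1$ (equivalently, on the series route, the legitimacy of differentiating the operator-valued power series in $t$ term by term). Both are handled by the uniform estimate above, which relies only on submultiplicativity of the operator norm and on the absolute, locally uniform convergence of the exponential series; the remaining ingredients are the commutation relation $A\exp(sA) = \exp(sA)A$ and the fundamental theorem of calculus for $C^1$ operator-valued functions on $[0,1]$.
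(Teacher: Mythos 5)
Your proof is correct. Note that the paper itself does not prove this lemma at all: it simply attributes the formula to the cited reference (Feynman, 1951) and moves on, so there is no in-paper argument to compare against. Your Duhamel-type derivation is the standard self-contained route: the computation of $G'(s)$ for $G(s)=\exp(-sA)\exp(s(A+tB))$ is right (the key point being that $A$ commutes with $\exp(-sA)$), the resulting finite-$t$ identity $\exp(A+tB)-\exp(A)=t\int_0^1\exp((1-s)A)\,B\,\exp(s(A+tB))\,\mathrm{d}s$ is exact, and in the finite-dimensional setting of the paper the interchange of $\lim_{t\to 0}$ with $\int_0^1$ is legitimately handled by the uniform Lipschitz estimate you describe (or, equivalently, by the continuity of $(s,t)\mapsto\exp(s(A+tB))$ on the compact set $[0,1]\times[-1,1]$). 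The alternative series identification via $\int_0^1(1-s)^j s^i\,\mathrm{d}s=i!\,j!/(i+j+1)!$ is also correct and gives a purely algebraic second proof. Either version would serve as a complete justification of the lemma the paper takes as given.
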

This enables a proof of Equation (A.5) from \cite{caputo2024quantumoptimaltransportconvex} in the case of entropic regularisation: 
\begin{corollary}[Gradient of the trace of the exponential] \label{cor:nablaTrExp}
    For every $ A \in H(\mathcal H) $, one has 
    \begin{align*}
        \nabla \tr \exp (A) = \exp (A). 
    \end{align*}
\end{corollary}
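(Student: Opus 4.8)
The plan is to compute the directional derivative of $\tr\exp$ at an arbitrary self-adjoint operator $A$ in an arbitrary direction $B \in H(\mathcal H)$, and then read off the gradient with respect to the Hilbert--Schmidt inner product. By definition, the gradient is the unique operator $G$ such that $\langle G, B\rangle_{\mathrm{HS}} = \tr(G^\dagger B) = \tr(GB)$ equals $\left.\frac{\mathrm d}{\mathrm d t}\right|_{t=0} \tr\exp(A+tB)$ for every $B$, so it suffices to show this directional derivative equals $\tr(\exp(A)\,B) = \langle \exp(A), B\rangle_{\mathrm{HS}}$.

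First I would apply Lemma \ref{lem:derivativeMatrixExp} to exchange the trace with the $t$-derivative (legitimate since everything is smooth in finite dimensions), obtaining
\begin{align*}
    \left.\frac{\mathrm d}{\mathrm d t}\right|_{t=0} \tr\exp(A+tB)
    = \tr\left( \mathrm d\exp_A(B) \right)
    = \tr\left( \int_0^1 \exp((1-s)A)\,B\,\exp(sA)\,\mathrm d s \right).
\end{align*}
Then I would pull the trace inside the integral (again fine by finiteness / continuity of the integrand) and use cyclicity of the trace on each slice:
\begin{align*}
    \tr\left( \exp((1-s)A)\,B\,\exp(sA) \right)
    = \tr\left( \exp(sA)\exp((1-s)A)\,B \right)
    = \tr\left( \exp(A)\,B \right),
\end{align*}
where the last equality uses $\exp(sA)\exp((1-s)A) = \exp(A)$ since $sA$ and $(1-s)A$ commute. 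The integrand is therefore constant in $s$, so the integral over $[0,1]$ just returns $\tr(\exp(A)B)$. This shows $\langle \nabla\tr\exp(A), B\rangle_{\mathrm{HS}} = \langle \exp(A), B\rangle_{\mathrm{HS}}$ for all $B \in H(\mathcal H)$, whence $\nabla\tr\exp(A) = \exp(A)$.

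There is essentially no hard part here; the only things to be mildly careful about are justifying the interchanges of trace/derivative and trace/integral (immediate in finite dimensions, where $\tr$ is a fixed bounded linear functional and the integrand depends continuously on $s$), and noting that $\exp(A)$ is itself self-adjoint so it is indeed a legitimate element of $H(\mathcal H)$ and the dagger can be dropped as in the paper's convention. The cyclicity-of-trace step collapsing the $s$-integral is the one genuinely useful observation, and it is precisely what makes the answer so clean.
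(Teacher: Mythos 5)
Your argument is correct and is exactly the route the paper's proof sketches: apply Lemma \ref{lem:derivativeMatrixExp}, use cyclicity of the trace to collapse the $s$-integral, and conclude $\langle \nabla\tr\exp(A)\mid B\rangle_{\mathrm{HS}} = \langle \exp(A)\mid B\rangle_{\mathrm{HS}}$ for all $B$. You have simply written out the details the paper leaves to the reader.
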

\begin{proof}
See \cite[Sec. 2.2]{MR2681769} or use Lemma \ref{lem:derivativeMatrixExp} and the cyclicity of the trace to show that $ \langle \nabla \tr \exp (A) | B \rangle_{\mathrm{HS}}  = \langle \exp(A) | B \rangle_{\mathrm{HS}} $ for every $ B \in H(\mathcal{H}) $. 
\end{proof}

\begin{lemma}[Estimate on the differential of the matrix exponential] \label{lem:hsDerivativeExp}
    For $ A, B \in H(\mathcal{H}) $, one has 
    \begin{align*}
        \exp(\lambda_{\min}(A)) \|B\|_{\mathrm{F}}^2
        \leq
        {
        \left\langle \mathrm{d} \exp_A (B) \middle| B \right\rangle_{\mathrm{HS}} 
        }{} 
        \leq \exp(\lambda_{\max}(A)) \|B\|_{\mathrm{F}}^2. 
    \end{align*}
\end{lemma}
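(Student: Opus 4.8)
The plan is to start from the integral representation of the differential of the matrix exponential supplied by Lemma \ref{lem:derivativeMatrixExp}, namely $\mathrm{d}\exp_A(B) = \int_0^1 \exp((1-s)A)\,B\,\exp(sA)\,\mathrm{d}s$, and to evaluate the Hilbert--Schmidt pairing with $B$ by passing to an eigenbasis of $A$. Writing $A = \sum_j \kappa_j |\xi_j\rangle\langle\xi_j|$ with orthonormal eigenvectors and setting $B_{jk} := \langle\xi_j| B |\xi_k\rangle$, self-adjointness of $B$ gives $B_{kj} = \overline{B_{jk}}$, hence $B_{kj}B_{jk} = |B_{jk}|^2 \geq 0$. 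Since the integrand above is symmetric under $s \mapsto 1-s$, $\mathrm{d}\exp_A(B)$ is itself self-adjoint, so $\langle \mathrm{d}\exp_A(B)\,|\,B\rangle_{\mathrm{HS}} = \tr(\mathrm{d}\exp_A(B)\,B)$, and inserting the spectral decomposition together with cyclicity of the trace yields
\begin{align*}
\left\langle \mathrm{d}\exp_A(B)\,\middle|\,B\right\rangle_{\mathrm{HS}}
= \int_0^1 \tr\!\big(\exp((1-s)A)\,B\,\exp(sA)\,B\big)\,\mathrm{d}s
= \sum_{j,k} |B_{jk}|^2 \int_0^1 e^{(1-s)\kappa_k + s\kappa_j}\,\mathrm{d}s.
\end{align*}

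Next I would bound the scalar integral $\int_0^1 e^{(1-s)\kappa_k + s\kappa_j}\,\mathrm{d}s$ uniformly in $j,k$. For each $s \in [0,1]$ the exponent $(1-s)\kappa_k + s\kappa_j$ is a convex combination of two eigenvalues of $A$, hence lies in $[\lambda_{\min}(A), \lambda_{\max}(A)]$; since $\exp$ is increasing, the integrand lies in $[\exp(\lambda_{\min}(A)), \exp(\lambda_{\max}(A))]$ pointwise, and therefore so does the integral (this also covers the degenerate case $\kappa_j = \kappa_k$, where the integral is simply $e^{\kappa_j}$). Substituting these bounds and pulling the constants out of the (nonnegative) sum gives
\begin{align*}
\exp(\lambda_{\min}(A)) \sum_{j,k}|B_{jk}|^2
\;\leq\; \left\langle \mathrm{d}\exp_A(B)\,\middle|\,B\right\rangle_{\mathrm{HS}}
\;\leq\; \exp(\lambda_{\max}(A)) \sum_{j,k}|B_{jk}|^2.
\end{align*}

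Finally I would identify $\sum_{j,k}|B_{jk}|^2$ with $\|B\|_{\mathrm{F}}^2$: the Frobenius norm is unitarily invariant, so it may be computed in the eigenbasis of $A$, whence $\sum_{j,k}|B_{jk}|^2 = \tr(B^\dagger B) = \|B\|_{\mathrm{F}}^2$, and the claimed two-sided estimate follows. I do not expect any real obstacle in this argument; the only points needing care are the bookkeeping in the change of basis --- in particular using $B^\dagger = B$ precisely where it turns the mixed term into the manifestly nonnegative weights $|B_{jk}|^2$, so that the inequalities on the scalar integral can be pulled through the sum --- and noting that the uniform bound on that integral is insensitive to eigenvalue degeneracies.
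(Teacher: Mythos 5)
Your proof is correct and follows essentially the same route as the paper: both start from the integral representation in Lemma \ref{lem:derivativeMatrixExp}, pass to an eigenbasis of $A$, and sandwich the resulting scalar exponents between $\lambda_{\min}(A)$ and $\lambda_{\max}(A)$. The only (cosmetic) difference is that you expand fully into the double sum $\sum_{j,k}|B_{jk}|^2$ with divided-difference weights, whereas the paper groups the same terms as $\exp((1-s)\kappa_j)\left\|\exp((s/2)A)B\xi_j\right\|^2$ and bounds the two factors separately.
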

\begin{proof}
    Let $ \{ \xi_j \}_{j=1}^d$ be orthonormal eigenvectors of $ A$ such that $ A = \sum\limits_{j=1}^d \kappa_j |\xi_j \rangle \langle \xi_j | $ with eigenvalues $ \kappa_1, \ldots , \kappa_d \in \R $. 
    Then for every $ j$
    \begin{align*}
        &\phantom{\in}
        \left\langle \xi_j \middle | \exp((1-s)A)B\exp(sA)   B  \middle | \xi_j \right \rangle_{\mathcal H} 
        = 
        \exp((1-s) \kappa_j )
        \left\|\exp((s/2)A) B \xi_j \right\|_{\mathcal H}^2 
        \\
        & \in \left[ \exp\left((1-s) \lambda_{\min}(A) \right)
        \left\|\exp((s/2)A) B \xi_j \right\|_{\mathcal H}^2 ,  
        \right. \\&\phantom{\in [} \left.
        \exp\left((1-s) \lambda_{\max}(A) \right)
        \left\|\exp((s/2)A) B \xi_j \right\|_{\mathcal H}^2 \right]
    \end{align*}
    and as $ \left\|\exp((s/2)A) B \xi_j \right\|_{\mathcal H}^2 \in \left[\exp \left(s \lambda_{\min} (A) \right) \left\| B \xi_j \right\|_{\mathcal H}^2, \exp \left(s \lambda_{\max} (A) \right) \left\| B \xi_j \right\|_{\mathcal H}^2  \right] $, one obtains
    \begin{align*}
        & \phantom{\in }
        \left\langle \xi_j \middle | \exp((1-s)A)B\exp(sA)   B  \middle | \xi_j \right \rangle_{\mathcal H} 
        \\
        &\in \left[\exp \left( \lambda_{\min} (A) \right) \left\| B \xi_j \right\|_{\mathcal H}^2, \exp \left( \lambda_{\max} (A) \right) \left\| B \xi_j \right\|_{\mathcal H}^2  \right]
    \end{align*}
    and hence
    \begin{align*}
        & \phantom{=}
        \tr \left( \exp((1-s)A)B\exp(sA)   B \right)
        \\
        &= \sum\limits_{j=1}^d \left\langle \xi_j \middle | \exp((1-s)A)B\exp(sA)   B  \middle | \xi_j \right \rangle_{\mathcal H} 
        \\
        & \in \left[ \exp \left(\lambda_{\min} (A) \right) \sum\limits_{j=1}^d \left\| B \xi_j \right\|_{\mathcal H}^2 , \exp \left(\lambda_{\max} (A) \right) \sum\limits_{j=1}^d \left\| B \xi_j \right\|_{\mathcal H}^2 \right]
        \\
        & = \left[ \exp \left(\lambda_{\min} (A) \right)  \left\| B  \right\|_{\mathrm F}^2 , \exp \left(\lambda_{\max} (A) \right) \left\| B  \right\|_{\mathrm F}^2 \right]
    \end{align*}
    because $  \left\| B  \right\|_{\mathrm F}^2 = \langle B | B \rangle_{\mathrm{HS}} = \tr \left( B^2 \right) = \sum\limits_{j=1}^d \left\langle \xi_j \middle | B^2 \middle | \xi_j \right\rangle_{\mathcal H} = \sum\limits_{j=1}^d \left\langle B \xi_j  \middle | B \xi_j \right\rangle_{\mathcal H} = \sum\limits_{j=1}^d \left\| B \xi_j \right\|_{\mathcal H}^2  $.
    From this, one sees using Lemma \ref{lem:derivativeMatrixExp}
    \begin{align*}
        \left\langle \mathrm{d} \exp_A (B) \middle| B \right\rangle_{\mathrm{HS}} 
        &=  \int\limits_0^1 \tr \left( \exp((1-s)A)B\exp(sA)   B  \right) \mathrm{d} s
        \\
        & \in \left[ \exp \left(\lambda_{\min} (A) \right)  \left\| B  \right\|_{\mathrm F}^2 , \exp \left(\lambda_{\max} (A) \right) \left\| B  \right\|_{\mathrm F}^2 \right].
    \end{align*}
\end{proof}

\section{Eigenvalue estimates} \label{sec:Eigenvalues}
\begin{lemma}[Lower bound on eigenvalues] \label{lem:evLowerBound}
    For all $ U \in H(\mathcal{H}_1)$ and $ V \in H(\mathcal{H}_2)$ 
    \begin{align*}
        &\phantom{\geq} 
        \lambda_{\min}( U \oplus V - C)  
        \\
        &\geq \varepsilon \nu_1 \left( D(U,V) - \varepsilon
        - \tr \left( C (\rho \otimes \sigma) \right) - (1 - \lambda_{\min}(\rho\otimes\sigma) )  \lambda_{\max} (U \oplus V - C) \right) ,
    \end{align*}
    holds, where the function
    $ \nu_1: \left] - \infty, \varepsilon  \lambda_{\min} (\rho \otimes \sigma) \left( \log (  \lambda_{\min} (\rho \otimes \sigma)) -  \log(d) - 1 \right) \right] \to \left] - \infty, \log \left( \lambda_{\min}(\rho \otimes \sigma)\right) - \log (d )  \right] $ is the inverse function (taking the smallest values) of $ x \mapsto \varepsilon {\lambda_{\min}(\rho \otimes \sigma)} x - d\varepsilon \exp(x) $,
    which is monotonically increasing on the specified domain. 
\end{lemma}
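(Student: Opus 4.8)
The plan is to reduce the claimed lower bound to two elementary spectral estimates — a trace inequality for $\tr\big((U\oplus V-C)(\rho\otimes\sigma)\big)$ and a lower bound on $\tr\exp\big((U\oplus V-C)/\varepsilon\big)$ — and then to invert the function whose inverse branch is $\nu_1$, with a short case distinction. Write $A:=U\oplus V-C$, $P:=\rho\otimes\sigma$ and $\mu:=\lambda_{\min}(\rho\otimes\sigma)>0$. Expanding the definition of the dual functional and using $\tr((U\oplus V)P)=\tr(AP)+\tr(CP)$ gives the identity
\begin{align*}
    D(U,V)-\varepsilon-\tr(CP)=\tr(AP)-\varepsilon\tr\exp\left(\tfrac{A}{\varepsilon}\right),
\end{align*}
so it suffices to bound the right-hand side from above in terms of $\lambda_{\min}(A)$ and $\lambda_{\max}(A)$.

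For the first term I would split $\tr(AP)=\tr\big(A(P-\mu\,\id)\big)+\mu\tr(A)$. Since $P-\mu\,\id\ge 0$ and $A\le\lambda_{\max}(A)\,\id$, the first summand is at most $\lambda_{\max}(A)\tr(P-\mu\,\id)=\lambda_{\max}(A)(1-d\mu)$, with $1-d\mu\ge 0$ because $\mu\le 1/d$; for the second, $\tr(A)\le\lambda_{\min}(A)+(d-1)\lambda_{\max}(A)$, as one eigenvalue of $A$ equals $\lambda_{\min}(A)$ and the other $d-1$ are at most $\lambda_{\max}(A)$. Adding and simplifying yields
\begin{align*}
    \tr(AP)\le\mu\,\lambda_{\min}(A)+(1-\mu)\lambda_{\max}(A).
\end{align*}
For the exponential, monotonicity of $\exp$ on the spectrum of $A$ gives $\tr\exp(A/\varepsilon)=\sum_j\exp(\lambda_j(A)/\varepsilon)\ge d\exp(\lambda_{\min}(A)/\varepsilon)$. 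Writing $g(x):=\varepsilon\mu x-d\varepsilon\exp(x)$ and combining,
\begin{align*}
    D(U,V)-\varepsilon-\tr(CP)-(1-\mu)\lambda_{\max}(A)\le\mu\,\lambda_{\min}(A)-d\varepsilon\exp\left(\tfrac{\lambda_{\min}(A)}{\varepsilon}\right)=g\left(\tfrac{\lambda_{\min}(A)}{\varepsilon}\right).
\end{align*}

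It remains to invert $g$. Let $y$ denote the left-hand side of the last display. The function $g$ attains its maximum $\varepsilon\mu(\log\mu-\log d-1)$ at $x_0:=\log\mu-\log d$ and is strictly increasing on $\left]-\infty,x_0\right]$; hence $y\le g(\lambda_{\min}(A)/\varepsilon)\le\max g$ shows $y$ lies in the domain of $\nu_1$, and $g(\nu_1(y))=y$ with $\nu_1(y)\le x_0$. If $\lambda_{\min}(A)/\varepsilon\le x_0$, then $g(\lambda_{\min}(A)/\varepsilon)\ge y=g(\nu_1(y))$ with both arguments in $\left]-\infty,x_0\right]$, so monotonicity gives $\lambda_{\min}(A)/\varepsilon\ge\nu_1(y)$; if instead $\lambda_{\min}(A)/\varepsilon>x_0$, then $\lambda_{\min}(A)/\varepsilon>x_0\ge\nu_1(y)$ trivially. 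In either case $\lambda_{\min}(U\oplus V-C)\ge\varepsilon\,\nu_1(y)$, which is the assertion.

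The expansion of $D$ and the algebraic simplification in the trace inequality are routine; the only point that needs a little care is the final inversion, where one must respect the branch of $g^{-1}$ that defines $\nu_1$ (the one taking the smallest values) and separately treat the case in which $\lambda_{\min}(A)/\varepsilon$ already lies beyond the maximiser $x_0$ of $g$. I do not anticipate a genuine obstacle.
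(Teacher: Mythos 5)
Your proposal is correct and follows essentially the same route as the paper: both bound $D(U,V)$ from above via the two estimates $\tr\big((U\oplus V-C)(\rho\otimes\sigma)\big)\le \mu\,\lambda_{\min}+(1-\mu)\lambda_{\max}$ and $\tr\exp\big((U\oplus V-C)/\varepsilon\big)\ge d\exp(\lambda_{\min}/\varepsilon)$, and then invert $x\mapsto\varepsilon\mu x-d\varepsilon e^{x}$ on its increasing branch. The only differences are cosmetic — you derive the trace inequality by splitting off $\mu\,\id$ from $\rho\otimes\sigma$ rather than working in the eigenbasis of $U\oplus V-C$, and you spell out the branch/case analysis of the inversion, which the paper leaves implicit.
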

\begin{proof}
    One can compute in the eigenbasis of $ U \oplus V - C = \sum\limits_{j=1}^d \kappa_j |\xi_j \rangle \langle \xi_j | $ with $ \kappa_1 \leq \kappa_2 \leq \ldots \leq \kappa_d $: 
    \begin{align*}
        D(U,V)
        &= \tr \left( ( U \oplus V) (\rho \otimes \sigma) \right) - \varepsilon \tr \left( \exp\left( \frac{U \oplus V - C}{\varepsilon} \right) \right) + \varepsilon
        \\
        &= \tr \left( (U \oplus V - C) (\rho \otimes \sigma) \right) - \varepsilon \tr \left( \exp\left( \frac{U \oplus V - C}{\varepsilon} \right) \right) + \varepsilon
        \\
        &\phantom{=} + \tr \left( C (\rho \otimes \sigma) \right)
        \\
        &=
        \sum\limits_{j=1}^d \kappa_j \langle \xi_j | (\rho \otimes \sigma) | \xi_j \rangle - \varepsilon \tr \left( \exp\left( \frac{U \oplus V - C}{\varepsilon} \right) \right)  + \varepsilon
        \\
        &\phantom{=} + \tr \left( C (\rho \otimes \sigma) \right).
    \end{align*}
    Now 
    $
        \exp\left( \frac{U \oplus V - C}{\varepsilon} \right)
        =
        \exp\left( \sum\limits_{j=1}^d \frac{\kappa_j}{\varepsilon} |\xi_j \rangle \langle \xi_j | \right) = \sum\limits_{j=1}^d \exp\left( \frac{\kappa_j}{\varepsilon} \right) |\xi_j \rangle \langle \xi_j |
    $
    and hence 
    \begin{align*}
        D(U,V) &= \sum\limits_{j=1}^d \kappa_j \langle \xi_j | (\rho \otimes \sigma) | \xi_j \rangle - \varepsilon \exp\left( \frac{\kappa_j}{\varepsilon} \right) + \varepsilon 
        + \tr \left( C (\rho \otimes \sigma) \right)
        \\
        & = \kappa_1 \langle  \xi_1 | (\rho \otimes \sigma) | \xi_1 \rangle 
        + \sum\limits_{j=2}^d \kappa_j \langle \xi_j | (\rho \otimes \sigma) | \xi_j \rangle 
        \\ 
        &\phantom{=}
        - \sum\limits_{j=1}^d \varepsilon  \exp\left( \frac{\kappa_j}{\varepsilon} \right)  + \varepsilon 
        + \tr \left( C (\rho \otimes \sigma) \right)
        \\
         & \leq \kappa_1  \langle \xi_1 | (\rho \otimes \sigma) | \xi_1 \rangle 
        + \sum\limits_{j=2}^d \kappa_d \langle \xi_j | (\rho \otimes \sigma) | \xi_j \rangle 
        \\ 
        &\phantom{=}
        - \sum\limits_{j=1}^d \varepsilon  \exp\left( \frac{\kappa_j}{\varepsilon} \right)  + \varepsilon
        + \tr \left( C (\rho \otimes \sigma) \right)
    \end{align*}
    because $ \rho \otimes \sigma $ is positive definite. 
    As the exponential map is monotone and  $ \sum\limits_{j=1}^d  \left\langle \xi_j | \rho \otimes \sigma | \xi_j \right\rangle = \tr \left( \rho \otimes \sigma \right) = 1 $, we get 
    \begin{align*}
        D(U,V)
        & \leq \kappa_1 \lambda_{\min}(\rho \otimes \sigma) + \kappa_d (1 - \lambda_{\min}(\rho\otimes\sigma) ) - d \varepsilon \exp\left(\frac{\kappa_1}{\varepsilon}\right) 
        \\
        & \phantom{\leq} + \varepsilon
        + \tr \left( C (\rho \otimes \sigma) \right),
    \end{align*}
    so 
    \begin{align*}
        &\phantom{\leq }D(U,V) - \varepsilon
        - \tr \left( C (\rho \otimes \sigma) \right) - (1 - \lambda_{\min}(\rho\otimes\sigma) )  \lambda_{\max} (U \oplus V - C)  
        \\
        &\leq 
        \varepsilon \lambda_{\min}(\rho \otimes \sigma) \frac{\lambda_{\min}(U \oplus V - C)}{\varepsilon} - d \varepsilon \exp \left( \frac{\lambda_{\min} (U \oplus V - C)}{\varepsilon} \right),
    \end{align*}
    which yields the desired estimate. 
\end{proof}
\begin{lemma}[Upper bound on eigenvalues] \label{lem:evUpperBound}
    One has 
    \begin{align*}
    &\phantom{\leq }\lambda_{\max}( U \oplus V - C) 
    \\
    &\leq 
    \varepsilon \nu_2 \left(\frac{\tr((\rho \otimes \sigma) C)- D(U,V)}{\varepsilon} - (d-1) \exp\left( \frac{\lambda_{\min} (U \oplus V - C)}{\varepsilon} \right)\right)
    \\
    &\leq 
    \varepsilon \nu_2 \left(\frac{\tr((\rho \otimes \sigma) C)- D(U,V)}{\varepsilon} \right)
    \end{align*}
    where $ \nu_2 : \R_{\geq 0} \to \R_{\geq 0} $ is the inverse function of $ x \mapsto e^x -x-1 $. 
\end{lemma}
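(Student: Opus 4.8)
\emph{Plan.} The plan is to mimic the spectral computation in the proof of Lemma~\ref{lem:evLowerBound}. Work in an orthonormal eigenbasis $U \oplus V - C = \sum_{j=1}^d \kappa_j |\xi_j\rangle\langle\xi_j|$ with $\kappa_1 \le \kappa_2 \le \dots \le \kappa_d$, so that $\kappa_1 = \lambda_{\min}(U\oplus V - C)$ and $\kappa_d = \lambda_{\max}(U \oplus V - C)$. Setting $p_j := \langle\xi_j|(\rho\otimes\sigma)|\xi_j\rangle \ge 0$, which satisfy $\sum_j p_j = \tr(\rho\otimes\sigma) = 1$, the same manipulation as there (diagonalising $\exp((U\oplus V - C)/\varepsilon)$ and rearranging the formula for $D$) gives
\begin{align*}
\frac{\tr((\rho\otimes\sigma)C) - D(U,V)}{\varepsilon} = -\sum_{j=1}^d \frac{\kappa_j}{\varepsilon}\, p_j + \sum_{j=1}^d \exp\!\Big(\frac{\kappa_j}{\varepsilon}\Big) - 1.
\end{align*}

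Next I would lower-bound the right-hand side. Since $\kappa_j \le \kappa_d$ for every $j$ and $\sum_j p_j = 1$, we have $\sum_j \kappa_j p_j \le \kappa_d$, hence $-\sum_j \tfrac{\kappa_j}{\varepsilon}p_j \ge -\tfrac{\kappa_d}{\varepsilon}$. Splitting off the top term and using $\kappa_j \ge \kappa_1$ for $j \le d-1$ gives $\sum_j \exp(\kappa_j/\varepsilon) \ge \exp(\kappa_d/\varepsilon) + (d-1)\exp(\kappa_1/\varepsilon)$. Writing $g(x) := e^x - x - 1$ and combining,
\begin{align*}
\frac{\tr((\rho\otimes\sigma)C) - D(U,V)}{\varepsilon} - (d-1)\exp\!\Big(\frac{\kappa_1}{\varepsilon}\Big) \ \ge\ g\!\Big(\frac{\kappa_d}{\varepsilon}\Big) \ \ge\ 0,
\end{align*}
so in particular the left-hand side lies in the domain $\R_{\ge 0}$ of $\nu_2 = (g|_{\R_{\ge 0}})^{-1}$, which matters for the last step to make sense.

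Finally I would apply $\nu_2$. Because $g \ge 0$ everywhere and $g|_{\R_{\ge 0}}$ is the strictly increasing branch, one has the uniform inequality $\nu_2(g(x)) \ge x$ for all $x \in \R$ (equality when $x \ge 0$; for $x<0$ one has $g(x) > g(0) = 0$ and hence $\nu_2(g(x)) > 0 > x$). Using this together with monotonicity of $\nu_2$ and the display above,
\begin{align*}
\frac{\kappa_d}{\varepsilon} \ \le\ \nu_2\!\Big(g\big(\tfrac{\kappa_d}{\varepsilon}\big)\Big) \ \le\ \nu_2\!\left(\frac{\tr((\rho\otimes\sigma)C) - D(U,V)}{\varepsilon} - (d-1)\exp\!\Big(\frac{\kappa_1}{\varepsilon}\Big)\right),
\end{align*}
which is the first claimed inequality after multiplying by $\varepsilon$ and recalling $\kappa_d = \lambda_{\max}(U\oplus V - C)$, $\kappa_1 = \lambda_{\min}(U\oplus V - C)$. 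The second inequality is then immediate: the term $(d-1)\exp(\kappa_1/\varepsilon)$ is non-negative, so dropping it only enlarges the argument of $\nu_2$, and one more appeal to monotonicity of $\nu_2$ finishes it.

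\emph{Main obstacle.} There is essentially no deep obstruction; the only point requiring care is that $g$ is not monotone on all of $\R$, so the passage from $g(\kappa_d/\varepsilon) \le (\text{argument})$ to $\kappa_d/\varepsilon \le \nu_2(\text{argument})$ is not a literal application of an inverse function and must be justified by the uniform bound $\nu_2\circ g \ge \mathrm{id}$ (equivalently, by treating $\kappa_d \ge 0$ and $\kappa_d < 0$ separately), and one must verify that the argument of $\nu_2$ stays in $\R_{\ge 0}$ throughout — both secured by the inequality $g(\kappa_d/\varepsilon) \ge 0$ displayed above. Everything else is the routine spectral bookkeeping already carried out in Lemma~\ref{lem:evLowerBound}.
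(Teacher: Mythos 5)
Your proof is correct and follows essentially the same route as the paper: diagonalise $U\oplus V-C$, rewrite $D(U,V)$ in that eigenbasis, bound $\sum_j\kappa_j p_j$ by $\kappa_d$ and split off the top exponential term, then invert via $\nu_2$. You are in fact slightly more careful than the paper at the last step, where it simply says ``$\nu_2$ is monotone'': your observation that $g$ is not monotone on all of $\R$ and that one needs the uniform bound $\nu_2\circ g\ge\mathrm{id}$ (together with the check that the argument of $\nu_2$ lies in $\R_{\ge 0}$) is exactly the right way to close that small gap.
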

\begin{proof}
    One can again compute in the eigenbasis of $ U \oplus V - C = \sum\limits_{j=1}^d \kappa_j |\xi_j \rangle \langle \xi_j | $: 
    \begin{align*}
        \frac{D(U,V)}{\varepsilon} 
        &= \tr \left( \frac{U \oplus V - C}{\varepsilon} (\rho \otimes \sigma) \right) - \tr \left( \exp\left( \frac{U \oplus V - C}{\varepsilon} \right) \right) + 1
        \\
        &\phantom{=} + \frac{1}{\varepsilon} \tr \left( C (\rho \otimes \sigma) \right)
        \\
        &= \sum\limits_{j=1}^d \left( \frac{\kappa_j}{\varepsilon} \langle \xi_j | \rho \otimes \sigma | \xi_j \rangle - \exp\left(\frac{\kappa_j}{\varepsilon}\right) \right) + 1 + \frac{1}{\varepsilon} \tr \left( C (\rho \otimes \sigma) \right). 
    \end{align*}
    As $ \kappa_j \leq \lambda_{\max}( U \oplus V - C) $ and $ \exp\left(\frac{\kappa_j}{\varepsilon}\right) \geq \exp\left( \frac{\lambda_{\min} (U \oplus V - C)}{\varepsilon} \right) $ for all $ j$ as well as  $ \sum\limits_{j=1}^d  \left\langle \xi_j | \rho \otimes \sigma | \xi_j \right\rangle = \tr \left( \rho \otimes \sigma \right) = 1 $, one obtains 
    \begin{align*}
        \frac{D(U,V) - \tr \left( C (\rho \otimes \sigma) \right)}{\varepsilon} 
        &= 
        \sum\limits_{j=1}^d \left( \frac{\kappa_j}{\varepsilon} \langle \xi_j | \rho \otimes \sigma | \xi_j \rangle - \exp\left(\frac{\kappa_j}{\varepsilon}\right) \right) + 1
        \\
        &\leq 
        \frac{\lambda_{\max}( U \oplus V - C)}{\varepsilon} - \exp \left( \frac{\lambda_{\max}( U \oplus V - C)}{\varepsilon} \right) + 1
        \\
        &\phantom{\leq} - (d-1) \exp\left( \frac{\lambda_{\min} (U \oplus V - C)}{\varepsilon} \right),
    \end{align*}
    so 
    \begin{align*}
        &\phantom{\leq }\exp \left( \frac{\lambda_{\max}( U \oplus V - C)}{\varepsilon} \right) - \frac{\lambda_{\max}( U \oplus V - C)}{\varepsilon} - 1
        \\
        &\leq \frac{\tr \left( C (\rho \otimes \sigma) \right)- D(U,V) }{\varepsilon} 
        - (d-1) \exp\left( \frac{\lambda_{\min} (U \oplus V - C)}{\varepsilon} \right) 
    \end{align*}
    and by observing that $ \nu_2 $ is monotone, the claim follows. 
    
\end{proof}

\section{Qualitative proof for the convergence of Algorithm \ref{alg:bga}}
\begin{proof}
    First, we note that $ \eta_1$ and $ \eta_2$ are chosen so that by Corollary \ref{cor:dualImprovment}, one has 
    \begin{align*}
        D \left( U_{n+1}, V_{n+1} \right) \geq D \left(U_{n+1}, V_n\right) \geq D \left( U_n, V_n \right) 
    \end{align*}
    for all $ n$. 
    As $ D$ is bounded from above due to Theorem \ref{thm:Duality}, one has 
    \begin{align*}
    D \left(U_{n+1}, V_n\right) - D \left( U_n, V_n \right) \overset{n \to \infty}\longrightarrow 0
    \end{align*}
    and hence 
    \begin{align}
        \tr_2 \left( \exp \left( \frac{U_n \oplus V_n - C}{\varepsilon} \right) \right) \overset{n \to \infty}\longrightarrow \rho \label{eq:limTr2}
    \end{align}
    as $ E_1 (U_n,V_n) \overset{n \to \infty}\longrightarrow 0 $ by Corollary \ref{cor:dualImprovment}. Analogously, 
    \begin{align}
        \tr_1 \left( \exp \left( \frac{U_n \oplus V_n - C}{\varepsilon} \right) \right) \overset{n \to \infty}\longrightarrow \sigma, \label{eq:limTr1}
    \end{align}
    so that $ \nabla D (U_n, V_n) \overset{n \to \infty}\longrightarrow 0 $. 
    As $D$ is concave due to Lemma \ref{lem:DstronglyConcave}, this suffices to conclude 
    \begin{align}
        D(U_n, V_n) \overset{n \to \infty}\longrightarrow \max\limits_{U \in H(\mathcal{H}_1), V \in H(\mathcal{H}_2) } D(U,V). \label{eq:limUnVnMaximisesDual}
    \end{align}
     By Lemma \ref{lem:BoundEvs}, the set $ \left\{ U \oplus V \colon D(U,V) \geq M \right\} $ is compact for every $ M \in \R$. 
    Thus there are subsequences $ \left( U_{n_k} \right)_k \subset H(\mathcal{H}_1), \left( V_{n_k} \right)_k \subset H(\mathcal{H}_2)  $ and $ \hat U\in H(\mathcal{H}_1), \hat V  \in H(\mathcal{H}_2)$ such that 
    \begin{align*}
    U_{n_k} \oplus V_{n_k} \overset{k \to \infty}\longrightarrow \hat U \oplus \hat V. 
    \end{align*}
    As Equation (\ref{eq:limUnVnMaximisesDual}) holds also for every converging subsequence, by the uniqueness of the maximiser due to the strong concavity, one even has
    \begin{align*}
        U_n \oplus V_n \overset{n \to \infty}\longrightarrow \hat U \oplus \hat V
    \end{align*}
    and by continuity we get
    \begin{align*}
        D \left( \hat U, \hat V \right) = \max\limits_{U \in H(\mathcal{H}_1), V \in H(\mathcal{H}_2) } D(U,V)
    \end{align*}
    and for $ \Gamma := \exp \left( \frac{\hat U \oplus \hat V - C}{\varepsilon} \right) = \lim\limits_{n\to \infty} \exp \left( \frac{U_n \oplus V_n - C}{\varepsilon} \right)$ one also obtains
    $ F\left(\hat \Gamma \right) = \min\limits_{\Gamma \in  \mathfrak{P}(\mathcal{H}_1 \otimes \mathcal{H}_2), \Gamma \mapsto (\rho, \sigma)} F(\Gamma) $ by Theorem \ref{thm:Duality}. Finally, one has $ \Gamma \mapsto (\rho, \sigma)$ by Equation (\ref{eq:limTr2}) and Equation (\ref{eq:limTr1}) and by construction $ \Gamma \in \mathfrak{P} \left( \mathcal{H}_1 \otimes \mathcal{H}_2 \right) $ holds. 
\end{proof}

\end{document}